\documentclass[12pt]{article}

\pdfoutput=1

\makeatletter
\newif\ifkp@upRm% is used in the .fd-file of jkp
\DeclareSymbolFont{Letters}{OML}{jkp}{m}{n}
\DeclareMathSymbol{\partialup}{\mathord}{Letters}{128}
\DeclareMathSymbol{\DD}{D}{Letters}{128}
\makeatother

\usepackage{amsmath}
\usepackage{amssymb}
\usepackage{graphicx}
\usepackage[dvipsnames]{xcolor}
\usepackage{soul}
\numberwithin{equation}{section}
\usepackage{array}
\usepackage{dsfont}

\usepackage{tikz}\usetikzlibrary{matrix,fit}
\usepackage{tikz-cd} 
\usepackage{varwidth}
\usepackage{enumerate}
\usepackage{appendix}
\usepackage{xfrac}
\usepackage{nicefrac}
\usepackage{mathtools,slashed}

\usepackage{geometry}
\usepackage{nicematrix}

\usepackage[
    backend=bibtex,
    style=alphabetic,
maxbibnames=99,
giveninits=true,
minalphanames=1,
maxalphanames=3,url=false
  ]{biblatex}

\bibliography{Tdual.bib}

\usepackage{mathabx}
\usepackage{empheq}

\usepackage{setspace}

\usepackage{slashed}
\usepackage{upgreek}

\usepackage{tabstackengine}

\usepackage{wrapfig}
\usepackage[abs]{overpic}

\usepackage{float}

\DeclareMathAlphabet{\dutchcal}{U}{dutchcal}{m}{n}
\SetMathAlphabet{\dutchcal}{bold}{U}{dutchcal}{b}{n}
\DeclareMathAlphabet{\dutchbcal}{U}{dutchcal}{b}{n}

\fixTABwidth{T}

\newdimen\mytextwidth
\newcommand\rem[2][cyan!40!green]{\noindent\nobreak\hfil\penalty1000\hfilneg
\mytextwidth=\linewidth\advance\mytextwidth by 2mm
\begin{tikzpicture}[baseline=-\the\dimexpr\fontdimen22\textfont2\relax]\node[outer sep=0pt,draw=black,fill=#1,fill opacity=1,text opacity=1,rectangle,rounded corners]{\begin{varwidth}{\mytextwidth}\textcolor{white}{#2}\end{varwidth}};
\end{tikzpicture}\allowbreak
}

\newcommand\whiterem[2][white!]{\noindent\nobreak\hfil\penalty1000\hfilneg
\mytextwidth=\linewidth\advance\mytextwidth by 2mm
\begin{tikzpicture}[baseline=-\the\dimexpr\fontdimen22\textfont2\relax]\node[outer sep=0pt,draw=black,fill=#1,fill opacity=1,text opacity=1,rectangle,rounded corners,line width=1.5pt]{\begin{varwidth}{\mytextwidth}\textcolor{black}{#2}\end{varwidth}};
\end{tikzpicture}\allowbreak
}

\makeatletter
\newsavebox{\@brx}
\newcommand{\llangle}[1][]{\savebox{\@brx}{\(\m@th{#1\langle}\)}%
  \mathopen{\copy\@brx\kern-0.5\wd\@brx\usebox{\@brx}}}
\newcommand{\rrangle}[1][]{\savebox{\@brx}{\(\m@th{#1\rangle}\)}%
  \mathclose{\copy\@brx\kern-0.5\wd\@brx\usebox{\@brx}}}
\makeatother

\newcommand{\dd}{\partialup}

\newcommand{\CP}{\mathbb{CP}}
\newcommand{\CC}{\mathbb{C}}

\newcommand{\ggoth}{\mathfrak{g}}

\newcommand{\hgoth}{\mathfrak{h}}

\newcommand{\bea}{\begin{equation}}
\newcommand{\eea}{\end{equation}}
\newcommand{\bear}{\begin{eqnarray}}
\newcommand{\eear}{\end{eqnarray}}
\newcommand{\bearr}{\begin{eqnarray*}}
\newcommand{\eearr}{\end{eqnarray*}}

\usepackage{ytableau}
\ytableausetup{centertableaux}

\usepackage{bm}

\usepackage{tikz}
\usepackage{xparse}
\NewDocumentCommand{\xrightarrows}{ O{}O{} }{%
\mathrel{%
\vcenter{\hbox{%
\begin{tikzpicture}
  \node[minimum width=1cm,minimum height=1ex,anchor=south,align=center] (a){\text{\vphantom{hg}#1}\\[0.5ex] \vphantom{hg}#2};
  \draw[<-] ([yshift=0.35ex]a.west) -- ([yshift=0.35ex]a.east);
  \draw[->] ([yshift=-0.35ex]a.west) -- ([yshift=-0.35ex]a.east);
\end{tikzpicture}
}}%
}%
}

\renewbibmacro{in:}{}

\usepackage{mdframed}

\newbibmacro{string+doi}[1]{%
    \iffieldundef{doi}
      {%
        \iffieldundef{url}
          {#1}
          {\href{\thefield{url}}{#1}}%
      }
      {\href{https://doi.org/\thefield{doi}}{#1}}%
  }

\DeclareFieldFormat{title}{\usebibmacro{string+doi}{\mkbibemph{#1}}}
\DeclareFieldFormat[article]{title}{\usebibmacro{string+doi}{\mkbibquote{#1}}}

\newmdenv[
  topline=false,
  bottomline=false,
  rightline=false,
  linewidth=2pt,
  skipabove=\topsep,
  skipbelow=\topsep
]{siderules}

\newmdenv[
  topline=false,
  bottomline=false,
  linewidth=2pt,
  skipabove=\topsep,
  skipbelow=\topsep
]{siderulesright}

\makeatletter
\renewcommand{\@seccntformat}[1]{\csname the#1\endcsname.\quad}
\makeatother

\usepackage{xpatch}

\makeatletter
\renewcommand{\@chap@pppage}{
  \clear@ppage
  \thispagestyle{plain}
  \if@twocolumn\onecolumn\@tempswatrue\else\@tempswafalse\fi
  \null\vfil
  \markboth{}{}
  {\centering
   \interlinepenalty \@M
   \normalfont
   \MakeUppercase \appendixpagename\par}
  \if@dotoc@pp
    \addappheadtotoc
  \fi
  \vfil\newpage
  \if@twoside
    \if@openright
      \null
      \thispagestyle{empty}
      \newpage
    \fi
  \fi
  \if@tempswa
    \twocolumn
  \fi
}
\makeatother

\definecolor{navycol}{RGB}{100,150,160}
   \definecolor{pinkcol}{RGB}{242,55,55}
   \definecolor{greencol}{RGB}{50,205,50}

   \definecolor{bluecol}{RGB}{30,144,255}

 \allowdisplaybreaks

\usepackage{titlesec}

\titleformat*{\section}{\large\bfseries}
\titleformat*{\subsection}{\normalsize\bfseries}
\titleformat*{\subsubsection}{\normalsize\bfseries}
\titleformat*{\paragraph}{\large\bfseries}
\titleformat*{\subparagraph}{\large\bfseries}
\titlespacing{\author}{-5pt}{-5pt}{-5pt}[-5pt]

\makeatletter
\renewcommand\subsubsection{\@startsection{subsubsection}{3}{\z@}
                                     {-3.25ex\@plus -1ex \@minus -.2ex}
                                     {-1.5ex \@plus -.2ex}
                                     {\normalfont\normalsize\bfseries}}
\renewcommand\subsection{\@startsection{subsection}{3}{\z@}
                                     {-3.25ex\@plus -1ex \@minus -.2ex}
                                     {-1.5ex \@plus -.2ex}
                                     {\normalfont\normalsize\bfseries}}                                     
\makeatother

\usepackage{pgfplots}
\pgfplotsset{compat=1.15}
\usepackage{mathrsfs}
\usetikzlibrary{arrows}
\usepackage{afterpage}
\usepackage{emptypage}

\DeclareFontFamily{U}{solomos}{}
\DeclareErrorFont{U}{solomos}{m}{n}{10}
\DeclareFontShape{U}{solomos}{m}{n}{
  <-> s*[1.1]  gsolomos8r
}{}

   \usepackage{stmaryrd}

\usepackage{tikz}
\usetikzlibrary{arrows.meta}

\usepackage{indentfirst}

\usepackage{tocloft}
\let \savenumberline \numberline
\def \numberline#1{\savenumberline{#1.}}

\usepackage{etoolbox}
\patchcmd{\tableofcontents}{\@starttoc}{\vspace{-0.3cm}\@starttoc}{}{}

\usepackage{accents}
\newcommand\thickbar[1]{\accentset{\rule{.7em}{.7pt}}{#1}}
\newcommand\smallthickbar[1]{\accentset{\rule{.5em}{.7pt}}{#1}}
\usepackage{hyperref}
\hypersetup{
colorlinks=true,
linkcolor=MidnightBlue,
citecolor=violet,
filecolor=purple,
urlcolor=cyan,
breaklinks=true
}

\renewcommand{\bar}{\thickbar}

\usepackage{upgreek}

\usepackage{amsthm}

\newcounter{Chapcounter}

\newcommand{\chapter}[1] 
{ {\centering          
  \addtocounter{Chapcounter}{1} \Large \underline{\sffamily \texorpdfstring{\textbf{  Chapter \theChapcounter: ~#1}}{Lg}} }   
  \addcontentsline{toc}{section}{ \color{MidnightBlue} \texorpdfstring{Chapter ~}{Lg}\theChapcounter.\texorpdfstring{~~}{Lg} #1 }    
}

\usepackage{scalerel}

\newtheorem{prop}{Proposition}
\newcommand{\RNum}[1]{\uppercase\expandafter{\romannumeral #1\relax}}

\newtheorem{corollary}{Corollary}

\newtheorem{definition}{Definition}

\title{\textbf{Anomalous symmetries in K\"ahler geometry} \vspace{0.7cm}}

\author{Dmitri Bykov$^{\,a,\,b,\,c,\,d}$\footnote{Emails:
 bykov@mi-ras.ru, dmitri.v.bykov@gmail.com} \qquad\qquad Andrew Kuzovchikov$^{\,a,\,b,\,c}$\footnote{Email:
 andrkuzovchikov@mail.ru}
\\  \vspace{0cm}  \\
{\small $a)$ 
\emph{Steklov
Mathematical Institute of Russian Academy of Sciences,}} \\{\small \emph{Gubkina str. 8, 119991 Moscow, Russia} }\\
{\small $b)$ 
\emph{Institute for Theoretical and Mathematical Physics,}} \\{\small \emph{Lomonosov Moscow State University, 119991 Moscow, Russia}}\\
{\small $c)$ \emph{HSE University, 6 Usacheva str., Moscow 119048, Russia}}\\
{\small $d)$ \emph{Beijing Institute of Mathematical Sciences and Applications (BIMSA),}} \\{\small \emph{Huairou District, Beijing
101408, China}}
}

\date{}

\begin{document}

\begin{titlepage}
\maketitle
\thispagestyle{empty}

\begin{abstract}
We initiate a study of centrally extended (\emph{anomalous}) symmetries in K\"ahler geometry, focusing on the simplest, and most ubiquitous, Abelian case. In particular, we provide a local description of the geometry admitting such isometries. A long-standing no-go theorem asserts that there is an obstruction to gauging such symmetries in the purely K\"ahler framework. Utilizing the language of sypersymmetry, we then show that these symmetries may be gauged within the setup of generalized K\"ahler geometry.  Our results may be applied to quotients, T-dualities, etc.
\end{abstract}

\end{titlepage}

\section{Introduction}

In the present paper we study 2D sigma models with $\mathcal{N}=(2,2)$ supersymmetry (SUSY). This amount of SUSY imposes stringent constraints on the target spaces of these sigma models.  Historically, the first, and arguably most important, examples of such models involved K\"ahler target spaces~\cite{Zumino}. From the modern standpoint these are sigma models with this amount of SUSY and a vanishing torsion\footnote{I.e., a vanishing or topological $B$-field.}. 
However, non-vanishing torsion is also possible: this corresponds to bi-Hermitian geometry~\cite{Gates}, 
which was later rediscovered and reformulated in mathematical terms as generalized K\"ahler geometry~\cite{HitchinGeneralized, Gualtieri, GualtieriKahler}. 

Just like in any field theory, an important aspect of sigma models is the introduction of gauge fields for the relevant symmetries. From a mathematical perspective, this is tightly related to the theory of generalized K\"ahler quotients (cf.~\cite{Hitchin_1987}). Even in the purely K\"ahler setting~\cite{BaggerWitten}, which will mostly concern us in the present paper, this is a rather non-trivial question. It was realized early in~\cite{HullRocek} that not every algebra of K\"ahler isometries can be gauged in superspace: there is an obstruction, which, as we shall elaborate, is related to the symmetry algebra being \emph{centrally extended}. It is natural to call such symmetries \emph{anomalous}, drawing a parallel with the more well-known anomalies tightly related to central extensions. For example, the Adler-Bell-Jackiw anomaly has been famously related to the central extension of the Lie algebra of gauge transformations~\cite{Faddeev, FaddeevShatashvili}, the central extension of the Virasoro algebra is governed by the Weyl anomaly in 2D conformal models (cf.~\cite{GSW}), etc.\footnote{See~\cite{Gritskov:2026yyi} for a recent discussion of perturbative anomalies in quantum mechanics and \cite{Gamayun:2026zav} for a cohomological computation of the conformal anomaly in $\mathrm{J}\widebar{\mathrm{J}}$-deformed 2D conformal models.} Unlike these anomalies, which are inherently quantum, the central extensions that we will describe are classical. In this regard, our setup is somewhat reminiscent of the results of~\cite{BrownHenneaux}, where central extensions in asymptotically $AdS_3$ gravity were considered, or of those in~\cite{OliveWitten, LosevShifman}, where classical central extensions of supersymmetry algebras were studied\footnote{Central extensions have also been studied in classical mechanics: some examples include the Galilei algebra~\cite{Bargmann, Leblond}, the symmetry algebra of a particle in a magnetic field (cf.~\cite{BykovKrivorol}), etc.}.

All of the symmetry algebras that we will encounter are finite-dimensional. There are various restrictions on the Lie algebras that may appear in the finite-dimensional setting:  in particular, Whitehead's lemma states that semi-simple algebras cannot have a non-trivial central extension (cf.~\cite{GuilleminSternberg1990symplectic, Landshoff}). The finite-dimensional Lie algebras that \emph{can} be centrally extended are the solvable ones. Out of these, Abelian ones are the most widely encountered. In this paper we will therefore restrict to Abelian algebras of K\"ahler isometries. There are plenty of examples of K\"ahler manifolds with Abelian isometries: such are, for example, the complex tori, as well as other toric manifolds. An important example, which served as an initial stimulant for this work, is the so-called $\eta$-deformed $\CP^n$ geometry, cf.~\cite{DelducMagroVicedo, Litvinov2019, Demulder, BykovLust} for more on this. 

It was found in~\cite{BKK} that, using a certain trick, in many cases the obstruction to gauging mentioned above may be overcome. An important observation was that the presence of a central extension is intimately linked to  the non-invariance of the K\"ahler potential under the symmetry transformations. It was also pointed out that, in several relevant cases, the potential could be made invariant by introducing auxiliary twisted chiral fields\footnote{Twisted chiral and semi-chiral superfields were introduced in~\cite{Gates} and \cite{BuscherRocek} respectively. The relation of various types of multiplets to the complex structures of generalized K\"ahler geometry has been elaborated in~\cite{LindstromOffShellComplete}.} entering via generalized K\"ahler transformations (i.e., via total derivatives). This turns out to be sufficient to gauge these symmetries in the generalized K\"ahler setting,  using a certain novel gauge multiplet\footnote{Another way of gauging in this setting makes use of the so-called Large Vector Multiplet of~\cite{LindstromVectorMultiplets, LindstromGeneralizedNonabelian}. The relation between the two types of multiplets has been elaborated in~\cite{BykovLindstromRocek}.} $(\mathbf{V}, \mathbf{X})$ consisting of a real gauge superfield $\mathbf{V}$ and a semi-chiral one $\mathbf{X}$.

The goal of this paper is to extend these results to the general case. To this end, we first characterize the most general K\"ahler geometry admitting an Abelian algebra of holomorphic isometries that is subject to a central extension. It turns out that it is always a K\"ahler quotient of a certain canonical local geometry, which we call $\mathcal{N}$, w.r.t. a (non-centrally extended) Abelian subalgebra. Since the latter  quotient may be performed using standard techniques, it suffices to study the gauging of the canonical geometry~$\mathcal{N}$. As we will show, the (centrally extended) symmetry  algebra of $\mathcal{N}$ is really a sum of several copies of the Heisenberg algebra (each being a central extension of $\mathbb{R}^2$). It follows that one only needs to specify the gauging of the Heisenberg algebra, which is what we ultimately do.

\vspace{0.3cm}
The paper is organized as follows. We start in Section~\ref{Kahcentralext} by recalling that the Lie algebra of moment maps for the action of a Lie group on a K\"ahler manifold may admit a central extension. For the case of an Abelian Lie algebra we prove an important relation between the central extension and the (non)invariance of the K\"ahler potential. In Section~\ref{metricssec} we describe a general local theory for metrics admitting central extensions, showing that they arise as K\"ahler quotients of certain very explicit model metrics. We then proceed in Section~\ref{gaugingsec} to discuss how such anomalous symmetries can be gauged. To this end one passes to the generalized K\"ahler geometry setup by introducing auxiliary twisted chiral fields that `cancel the anomaly' and subsequently gauges the relevant symmetries. Applications to (generalized) quotients as well as T-dualities are discussed.

\section{Central extensions in  K\"ahler geometry} \label{Kahcentralext} Consider a general K\"ahler manifold with complex coordinates $\mathrm{z}_i$ ($i=1 \cdots N$), K\"ahler potential $\mathcal{K}$ and the K\"ahler form
\begin{align}
    \omega={i\over 2}\sum\limits_{j, k}^N\,{\dd^2\mathcal{K}\over \dd \mathrm{z}_j \dd \smallthickbar{\mathrm{z}}_k}\,d\mathrm{z}_j \wedge d\smallthickbar{\mathrm{z}}_k\,.
\end{align}
Suppose we have a holomorphic Killing vector field 
\begin{align}
  \mathsf{V}=\mathsf{V}^i(\mathrm{z}) \frac{\dd}{\dd \mathrm{z}_i}+\widebar{\mathsf{V}}^i(\smallthickbar{\mathrm{z}})\frac{\dd}{\dd \smallthickbar{\mathrm{z}}_i}=\mathsf{V}^{(1,0)} + \mathsf{V}^{(0,1)}\,,  
\end{align}
where $\mathsf{V}^{(1,0)}$ and $\mathsf{V}^{(0,1)}$ are the $(1,0)$ and $(0,1)$ parts of $\mathsf{V}$, respectively. Notice that the coefficients $\mathsf{V}^i(\mathrm{z})$ are holomorphic as a consequence of the fact that $\mathsf{V}$ preserves the complex structure, $\pounds_{\mathsf{V}} \mathcal{J}=0$. 

In this case the condition that $\mathsf{V}$ be Killing, i.e. $\pounds_{\mathsf{V}} \mathcal{G}=0$, is equivalent to the fact that it be a symplectomorphism,  $\pounds_{\mathsf{V}} \omega=0$, since the two structures are related by $\mathcal{G}(\bullet, \bullet')=\omega(\bullet, \mathcal{J} \bullet')$. In other words\footnote{Here and below by $\nabla_\mathsf{V}$ we mean the action of the vector field on a function: ${\nabla_\mathsf{V} \mathcal{K}\equiv {(\mathsf{V}^j \dd_{\mathrm{z}_j}+\widebar{\mathsf{V}}^j \dd_{\smallthickbar{\mathrm{z}}_j}) \mathcal{K}}}\equiv \pounds_{\mathsf{V}} \mathcal{K}$. Sometimes we will also use the notation $\nabla_\mathsf{V}^{(1,0)} \mathcal{K}=\mathsf{V}^j \dd_{\mathrm{z}_j}\mathcal{K}$ and analogously for the $(0,1)$ part. Since $\mathsf{V}^j(z)$ and $\mathsf{W}^j(z)$  are holomorphic, one has $\bigl[\nabla_\mathsf{V}^{(1,0)}, \nabla_\mathsf{W}^{(0,1)}\bigr]=0$.}, ${\pounds_{\mathsf{V}} \omega=d(i_\mathsf{V} \omega)={i\over 2} \dd \bar{\dd} (\nabla_\mathsf{V} \mathcal{K})=0}$, so that
\begin{align}\label{Kvar}
   \nabla_\mathsf{V} \mathcal{K}=f_\mathsf{V}(\mathrm{z})+\widebar{f}_\mathsf{V}(\smallthickbar{\mathrm{z}})\,, 
\end{align}
where $f_\mathsf{V}(\mathrm{z})$ is holomorphic. We see that, despite the fact that the metric is invariant under~$\mathsf{V}$, the K\"ahler potential does not have to be invariant but may instead be subjected to a K\"ahler transformation. This has an interesting effect on the Lie algebra of moment maps $\upmu_\mathsf{V}$, which are the Hamiltonians for the action of the vector fields, defined via  $i_\mathsf{V} \omega=d \upmu_\mathsf{V}$. In case the manifold is not simply-connected, $i_\mathsf{V} \omega$ might not be exact so that the moment map may exist only locally\footnote{As an example, consider an $\mathrm{S}^1$-action on a 2-torus $\mathbb{T}^2$. In some situations one can overcome the problem by defining a moment map with values in $\mathrm{S}^1$ \cite{McDuff_1988,Ortega_2004}.}. As we will explain shortly, local considerations will suffice for our purposes.  The explicit expression for the moment map is 
\begin{align}\label{momap}
    \upmu_\mathsf{V}:={i\over 2}\left(\mathsf{V}^j\dd_{\mathrm{z}_j} \mathcal{K}-\widebar{\mathsf{V}}^j \dd_{\smallthickbar{\mathrm{z}}_j} \mathcal{K}+\widebar{f}_\mathsf{V}(\smallthickbar{\mathrm{z}})-f_\mathsf{V}(\mathrm{z})\right)\,.
\end{align}
%It is a real valued function due to~(\ref{Kvar}). 
By a direct calculation, one finds a complete description of the Lie algebra of moment maps w.r.t. the Poisson bracket:

\begin{prop} [\cite{Souriau, Libermann_1987, HullRocek, Arnold_1978}]
    Let $\upmu_\mathsf{V}, \upmu_\mathsf{W}$ be the moment maps corresponding to the holomorphic Killing vector fields $\mathsf{V}, \mathsf{W}$. Their Poisson bracket is\footnote{The functions $f_\mathsf{V}$ and $f_\mathsf{W}$ are holomorphic, thus, $\nabla_\mathsf{V} f_\mathsf{W}=\nabla_\mathsf{V}^{(1,0)}f_\mathsf{W}$ and  $\nabla_{\mathsf{W}} f_\mathsf{V}=\nabla_\mathsf{W}^{(1,0)}f_\mathsf{V}$.} 
    \begin{align} \label{PoissonBracket}
    &\Bigl\{ \upmu_\mathsf{W}, \upmu_\mathsf{V} \Bigr\}=\upmu_{[\mathsf{W}, \mathsf{V}]}+\dutchcal{C}(\mathsf{W}, \mathsf{V})\,,\\ \label{CentralExtension} & \textrm{where}\quad \dutchcal{C}(\mathsf{W}, \mathsf{V})=i \left(\nabla_{\mathsf{V}} f_\mathsf{W}-\nabla_{\mathsf{W}} f_\mathsf{V}+f_{[\mathsf{W}, \mathsf{V}]}\right)\,.
    \end{align}
    Moreover, $\dutchcal{C}(\mathsf{W}, \mathsf{V})$ is a (real) constant. 
\end{prop}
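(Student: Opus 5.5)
Use the general symplectic fact $\{\upmu_\mathsf{W},\upmu_\mathsf{V}\}=\pm\,\omega(\mathsf{W},\mathsf{V})$ (sign fixed by the convention $i_\mathsf{V}\omega=d\upmu_\mathsf{V}$), together with $i_{[\mathsf{W},\mathsf{V}]}\omega=[\pounds_\mathsf{W},i_\mathsf{V}]\omega=\pounds_\mathsf{W}\, d\upmu_\mathsf{V}=d(\nabla_\mathsf{W}\upmu_\mathsf{V})$. Hence $\upmu_{[\mathsf{W},\mathsf{V}]}$ and the bracket have the same differential, so their difference is locally constant; we work on a connected (local) domain, as the paper allows. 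Reality of $\dutchcal{C}$ is automatic, since the moment maps are real by (\ref{Kvar}). The only remaining work is to identify this constant with the explicit formula (\ref{CentralExtension}).

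For that we compute directly. Write $\{\upmu_\mathsf{W},\upmu_\mathsf{V}\}=\nabla_\mathsf{W}\upmu_\mathsf{V}$ (up to the convention sign, to be tracked once) and insert (\ref{momap}):
\[
\nabla_\mathsf{W}\upmu_\mathsf{V}=\tfrac{i}{2}\Bigl(\nabla_\mathsf{W}\nabla^{(1,0)}_\mathsf{V}\mathcal{K}-\nabla_\mathsf{W}\nabla^{(0,1)}_\mathsf{V}\mathcal{K}+\nabla_\mathsf{W}\widebar f_\mathsf{V}-\nabla_\mathsf{W} f_\mathsf{V}\Bigr).
\]
Split $\nabla_\mathsf{W}=\nabla^{(1,0)}_\mathsf{W}+\nabla^{(0,1)}_\mathsf{W}$ and use $[\nabla^{(1,0)}_\mathsf{V},\nabla^{(0,1)}_\mathsf{W}]=0$ together with $[\nabla^{(1,0)}_\mathsf{W},\nabla^{(1,0)}_\mathsf{V}]=\nabla^{(1,0)}_{[\mathsf{W},\mathsf{V}]}$ (holomorphic vector fields). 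From (\ref{Kvar}) we have $\nabla^{(0,1)}_\mathsf{W}\nabla^{(1,0)}_\mathsf{V}\mathcal{K}=\nabla^{(1,0)}_\mathsf{V}(\widebar f_\mathsf{W}+f_\mathsf{W}-\nabla^{(1,0)}_\mathsf{W}\mathcal{K})$, and the analogous identity with the roles of the types exchanged. Substituting these turns every mixed second derivative of $\mathcal{K}$ into a commutator term $\nabla^{(1,0)}_{[\mathsf{W},\mathsf{V}]}\mathcal{K}$ (and its conjugate) plus derivatives of the $f$'s. The commutator terms assemble into $\upmu_{[\mathsf{W},\mathsf{V}]}$ once we add and subtract $f_{[\mathsf{W},\mathsf{V}]}$. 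What remains is $i(\nabla_\mathsf{V} f_\mathsf{W}-\nabla_\mathsf{W} f_\mathsf{V}+f_{[\mathsf{W},\mathsf{V}]})$; holomorphy kills $\nabla^{(0,1)}$ acting on the $f$'s. Note also that $f_{[\mathsf{W},\mathsf{V}]}$ is defined only up to an imaginary constant, which is exactly the freedom of shifting the moment map by a constant, so the formula is consistent.

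The main obstacle is the bookkeeping of signs and factors of $\tfrac{i}{2}$, particularly in showing that the $\mathcal{K}$-dependent terms reproduce $\upmu_{[\mathsf{W},\mathsf{V}]}$ exactly. Constancy itself is not an issue. It follows from the first paragraph, and can also be seen by hand: applying $\nabla_\mathsf{U}$ to (\ref{Kvar}) and using Jacobi-type identities shows that $\dutchcal{C}$ is holomorphic and equal to its own conjugate up to sign, hence constant.
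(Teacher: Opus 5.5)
Your proposal is correct and follows the paper's route: the paper says only ``by a direct calculation'' and adds that reality of $\dutchcal{C}$ follows from reality of the moment maps. Your Cartan-formula argument $i_{[\mathsf{W},\mathsf{V}]}\omega=\pounds_\mathsf{W}\,d\upmu_\mathsf{V}$ is a clean coordinate-free supplement for constancy.

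One step is misdescribed, though. Expanding both halves of (\ref{momap}) as you do leaves $\tfrac{i}{2}\bigl(Q-\widebar{Q}\bigr)$, where $Q:=\nabla_\mathsf{V} f_\mathsf{W}-\nabla_\mathsf{W} f_\mathsf{V}+f_{[\mathsf{W},\mathsf{V}]}$. The conjugate terms $\nabla_\mathsf{V}\widebar{f}_\mathsf{W}$, $\nabla_\mathsf{W}\widebar{f}_\mathsf{V}$, $\widebar{f}_{[\mathsf{W},\mathsf{V}]}$ are not killed by holomorphy. To reach $iQ$ you need $Q+\widebar{Q}=0$, which follows from $[\nabla_\mathsf{W},\nabla_\mathsf{V}]\mathcal{K}=\nabla_{[\mathsf{W},\mathsf{V}]}\mathcal{K}$ together with (\ref{Kvar}) for $\mathsf{V}$, $\mathsf{W}$ and $[\mathsf{W},\mathsf{V}]$. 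This is the identity you invoke only at the end, for constancy, but it is also needed to identify the constant.

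A shorter path is to first use (\ref{Kvar}) to write $\upmu_\mathsf{V}=i\bigl(\nabla^{(1,0)}_\mathsf{V}\mathcal{K}-f_\mathsf{V}\bigr)$. Then $\nabla_\mathsf{W}\upmu_\mathsf{V}-\upmu_{[\mathsf{W},\mathsf{V}]}=iQ$ drops out at once. Reality of the moment maps makes $iQ$ real, and since $Q$ is holomorphic this gives the real-constant claim, matching the paper's remark.
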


The fact that $\dutchcal{C}$ is real follows from the reality of moment maps. Therefore, if $\ggoth$ denotes the (real) Lie algebra of holomorphic vector fields,   one gets a central extension $\dutchcal{C}\in \mathfrak{g}^\ast\wedge \mathfrak{g}^\ast$. The extended Lie algebra is labeled $\widehat{\ggoth}=\ggoth\,\oplus\, \mathbb{R}\,\dutchcal{C}$, where by $\oplus$ one means a sum of vector spaces. A central extension is called trivial if there exists a $d\in \ggoth^\ast$ such that $\dutchcal{C}(\mathsf{W}, \mathsf{V})=d([\mathsf{W}, \mathsf{V}])$. In this case a mere shift of all moment maps, $\mu_{\mathsf{V}}\mapsto \mu_{\mathsf{V}}-d(\mathsf{V})$, eliminates the central extension in~(\ref{PoissonBracket}).

The expression for $\dutchcal{C}$ is invariant under the transformation $f_\mathsf{V}\mapsto f_\mathsf{V}+\nabla_\mathsf{V} h(\mathrm{z})$, which in turn is generated by a K\"ahler transformation $\mathcal{K} \mapsto \mathcal{K}+h(\mathrm{z})+\widebar{h}(\smallthickbar{\mathrm{z}})$. Since the K\"ahler potentials in two different patches of a given K\"ahler manifold $\mathcal{M}$ are related by a K\"ahler transformation, the absence or presence of a central extension is a local condition and may be studied in a single patch. If $\mathrm{dim}_{\CC}\, \mathcal{M}=N$, we may thus assume that we are given a set of vector fields in a domain $\mathrm{U}\subset \CC^N$.

\subsection{Cohomological interpretation.} The set of functions $f_{\mathsf{V}}$ ($\mathsf{V}\in \ggoth$) in~(\ref{Kvar}) may be regarded as a map 
\begin{align}
    f:\quad \mathfrak{g}\mapsto \mathcal{R}:=\mathcal{O}(\mathrm{U})\,,
\end{align}
where $\mathcal{O}(\mathrm{U})$ is the sheaf of holomorphic functions in a domain $\mathrm{U}\subset\CC^N$. Moreover, $\mathcal{R}$ may be thought of as a representation of $\mathfrak{g}$, where the Lie algebra acts by vector fields. In these terms $f$ is a 1-cochain on $\mathfrak{g}$ with values in $\mathcal{R}$. The central extension may then be written as 
\begin{align}\label{deltadef}
    \dutchcal{C}=\delta f\,,\quad\quad \textrm{where}\quad\quad \delta: \mathfrak{g}^\ast\otimes \mathcal{R}\mapsto (\mathfrak{g}^\ast)^{\wedge 2 } \otimes \mathcal{R}\,
\end{align}
is the Chevalley-Eilenberg differential 
defined in~(\ref{CentralExtension}).

Another important point has to do with the fact that $\dutchcal{C}(\mathsf{V}, \mathsf{W})\in \mathbb{R}$ is a real constant: taking into account that $\mathbb{R}\subset \mathcal{R}$ (i.e., a trivial representation embedded in $\mathcal{R}$ by constant functions), we may refine~(\ref{deltadef}) by stating that
\begin{align}
    \dutchcal{C}=\delta f \in (\mathfrak{g}^\ast)^{\wedge 2 } \otimes \mathbb{R}\,.
\end{align}
Thus,   whenever $\dutchcal{C}\nequiv 0$, one has a cocycle $\bigl[\dutchcal{C} \bigr]\in H^2(\ggoth, \mathbb{R})$,  since $\delta \dutchcal{C}=\delta^2 f=0$. It is trivial in the $(\ggoth, \mathbb{R})$-complex only if $f\in \mathfrak{g}^\ast\otimes \mathbb{R}$, i.e. if $f_\mathsf{V}$ is a real constant for any $\mathsf{V}$.  It then follows from the definition of $\dutchcal{C}$ that in this case $\dutchcal{C}(\mathsf{W}, \mathsf{V})=i\,f_{[\mathsf{W}, \mathsf{V}]}$. Defining $d\in\ggoth^\ast$ by $d(\mathsf{V})=i\,f_{\mathsf{V}}$, one sees that this corresponds to a trivial central extension, as  defined above. 

\subsection{Vanishing central extension.}\label{Vanishing central extension}

A trivial central extension implies $\dutchcal{C}=\delta f\equiv 0$,  i.e. that~$f$ is a  cocycle, $f\in H^1(\ggoth, \mathcal{R})$. A trivial cocycle is $f=\delta h$, which is the same as $f_\mathsf{V} = \nabla_\mathsf{V} h(\mathrm{z})$ for all $\mathsf{V}\in \ggoth$. In this case, to produce an invariant K\"ahler potential, one simply shifts $\mathcal{K} \mapsto \mathcal{K}-h(\mathrm{z})-\widebar{h}(\smallthickbar{\mathrm{z}})$. 

From now on let us assume that the Lie algebra~$\mathfrak{g}$ is Abelian.  We will now prove that, in K\"ahler geometry, the vanishing of the central extension for Abelian\footnote{Note that, for non-Abelian $\ggoth$, in general this is not true.} $\mathfrak{g}$ implies that the cocycle $f$ is trivial, i.e. $\bigl[f \bigr]=0$. We will need the following notion of linear independence of vector fields:

\begin{definition}
    A finite family of vector fields $\bigl\{v_A\bigr\}_{A=1}^m$ in an open domain $\mathrm{U}$ is called linearly independent over $\mathbb{C}$ (or $\mathbb{R}$) iff for every point $p \in \mathrm{U}$ the vectors $\bigl\{v_A\big|_p\bigr\}_{A=1}^m$ are linearly independent over $\mathbb{C}$ (or $\mathbb{R}$).
\end{definition}

\vspace{0.1cm}
Here $v_A$ could either stand for the real vector fields $\mathsf{V}_A$, in which case one considers linear (in)dependence over $\mathbb{R}$, or their $(1,0)$ parts $\mathsf{V}_A^{(1,0)}$, where one is interested in linear (in)dependence over $\mathbb{C}$.

\vspace{0.2cm}
Let us start with the following proposition:

\begin{prop}\label{linindepvectorfieldsprop}
Assume the holomorphic parts of Killing vector fields $\mathsf{V}_A$ are linearly independent over $\mathbb{C}$. Then the cocycle $f$ is trivial, $\bigl[ f \bigr]=0$.
\end{prop}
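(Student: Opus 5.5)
We work in the setting of Section~\ref{Vanishing central extension}: $\ggoth$ is Abelian and the central extension vanishes, so $\delta f=0$. Since $[\mathsf{V}_A,\mathsf{V}_B]=0$, this reads $\nabla_{\mathsf{V}_A} f_B=\nabla_{\mathsf{V}_B} f_A$ for all $A,B$, where $f_A:=f_{\mathsf{V}_A}$. We must find a holomorphic $h$ on $\mathrm{U}$ (after shrinking $\mathrm{U}$ if necessary, which is harmless because the question is local) with $\nabla^{(1,0)}_{\mathsf{V}_A}h=f_A$ for every $A$.

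The plan is to straighten the commuting holomorphic vector fields. The fields $\mathsf{V}_A^{(1,0)}$ commute, because the $(1,0)$ part of $[\mathsf{V}_A,\mathsf{V}_B]=0$ is $[\mathsf{V}_A^{(1,0)},\mathsf{V}_B^{(1,0)}]=0$ (the vector fields are holomorphic). They are also pointwise linearly independent over $\CC$. By the holomorphic Frobenius / flow-box theorem for commuting fields, near any point there are holomorphic coordinates $(w_1,\dots,w_m,u_1,\dots,u_{N-m})$ in which $\mathsf{V}_A^{(1,0)}=\dd/\dd w_A$. To build them, choose a complex transversal slice $S$ parametrized by $u$ and map
\[
(w,u)\mapsto \exp\Bigl(\sum_A w_A \mathsf{V}_A^{(1,0)}\Bigr)(S(u)),
\]
using the complex-time flows. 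Commutativity makes these flows compose, and independence makes the map a local biholomorphism.

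In these coordinates the cocycle condition becomes $\dd_{w_A} f_B=\dd_{w_B} f_A$. So the holomorphic $1$-form $\sum_A f_A\,dw_A$, which contains no $du$ components, is closed in the $w$-directions with $u$ held as a parameter. On a polydisc we apply the holomorphic Poincar\'e lemma with parameters:
\[
h(w,u)=\int_0^1 \sum_A w_A\, f_A(tw,u)\,dt .
\]
This $h$ is holomorphic in all variables and satisfies $\dd_{w_A}h=f_A$, that is, $f=\delta h$ and $[f]=0$. Finally, as noted in Section~\ref{Vanishing central extension}, the shift $\mathcal{K}\mapsto\mathcal{K}-h-\widebar{h}$ then produces an invariant potential.

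The main obstacle is the straightening step. We must justify that complex flows of holomorphic fields exist locally and commute, which needs independence over $\CC$, not merely over $\mathbb{R}$ of the real fields. We must also check that the construction is genuinely local, so that shrinking $\mathrm{U}$ to a polydisc in the new coordinates is allowed; the cohomology class here is local, which permits this. A secondary point is real versus complex constants: $\delta f=0$ already holds exactly, so no adjustment by constants is needed.
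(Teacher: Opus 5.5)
Your proof is correct and takes the same route as the paper: rectify the commuting, $\mathbb{C}$-independent holomorphic parts to $\partial/\partial w_A$, then apply the Poincar\'e lemma to the closedness condition $\partial_{w_A} f_B=\partial_{w_B} f_A$. You additionally supply details the paper leaves implicit: why the $(1,0)$ parts commute, the complex-flow construction of the straightening coordinates, the explicit parametric homotopy formula for $h$, and the fact that the argument is only local.
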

\begin{proof}
Under the assumption of linear independence in an open domain $\mathrm{U}$, the vector fields may be rectified to the canonical form  $\mathsf{V}_A={\dd_{\mathrm{z}_A}}, A=1, \ldots, m$ \cite[Chapter~9]{Lee_2012}. The vanishing of the central extension takes the form 
\begin{align}
   \dutchcal{C}(\mathsf{V}_A, \mathsf{V}_B)={\dd \over \dd \mathrm{z}_A} f_{\mathsf{V}_B} -{\dd \over \dd \mathrm{z}_B} f_{\mathsf{V}_A}=0\,, \quad\quad \textrm{so that}\quad\quad  f_{\mathsf{V}_A}={\dd \over \dd \mathrm{z}_A} h
\end{align}
by the Poincar\'e lemma. 
\end{proof}

Now we are ready to prove the statement announced above: 
\begin{prop}\label{abeliantrivcocycleprop}
    Assume the Lie algebra $\ggoth=\mathrm{Span}(\mathsf{V}_1, \ldots, \mathsf{V}_n)$ (with $\mathsf{V}_A$ linearly independent over $\mathbb{R}$) is Abelian. Additionally, suppose $\dutchcal{C}\equiv 0$, so that 
one has  
a 1-cocycle $f\in H^1(\ggoth, \mathcal{O}(\mathrm{U}))$. Then  
$\mathsf{V}_1^{(1,0)}, \ldots, \mathsf{V}_n^{(1, 0)}$ are linearly independent and therefore $\bigl[ f \bigr]=0$.
\end{prop}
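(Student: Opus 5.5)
The plan is to argue by contradiction: a pointwise complex linear dependence among the $\mathsf{V}_A^{(1,0)}$ produces two Killing fields in $\ggoth$ related by the complex structure at some point. Their moment maps then have a nonzero Poisson bracket there, which contradicts the bracket relation~(\ref{PoissonBracket}) together with $[\ggoth,\ggoth]=0$ and $\dutchcal{C}\equiv 0$. Once complex linear independence is established at every point of $\mathrm{U}$, the triviality $\bigl[f\bigr]=0$ follows directly from Proposition~\ref{linindepvectorfieldsprop}.

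Concretely, suppose that at some point $p\in\mathrm{U}$ there are complex numbers $c_A=a_A+i\,b_A$, not all zero, with $\sum_A c_A\,\mathsf{V}_A^{(1,0)}\big|_p=0$. Set $\mathsf{X}=\sum_A a_A\mathsf{V}_A$ and $\mathsf{Y}=\sum_A b_A \mathsf{V}_A$; both lie in $\ggoth$. The relation reads $\mathsf{X}^{(1,0)}\big|_p=-i\,\mathsf{Y}^{(1,0)}\big|_p$. Taking the complex conjugate gives the analogous relation for the $(0,1)$ parts, $\mathsf{X}^{(0,1)}\big|_p=+i\,\mathsf{Y}^{(0,1)}\big|_p$.

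Since $\mathcal{J}$ acts as $i$ on $(1,0)$ vectors and as $-i$ on $(0,1)$ vectors, this is precisely $\mathsf{Y}\big|_p=\pm\mathcal{J}\mathsf{X}\big|_p$, with the sign fixed by conventions. Next I need $\mathsf{X}\big|_p\neq 0$. If $a=0$, then $b\neq 0$, so $\mathsf{Y}\big|_p\neq 0$ by real linear independence at $p$. A nonzero real vector cannot have vanishing $(1,0)$ part, so $\mathsf{Y}^{(1,0)}\big|_p\neq 0$, contradicting $\mathsf{Y}^{(1,0)}\big|_p=i\,\mathsf{X}^{(1,0)}\big|_p=0$. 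Hence $a\neq 0$, and real independence at $p$ then gives $\mathsf{X}\big|_p\neq 0$.

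Now I compute the bracket of moment maps. Since $i_\mathsf{V}\omega=d\upmu_\mathsf{V}$, one has $\{\upmu_\mathsf{X},\upmu_\mathsf{Y}\}=\pm\,\omega(\mathsf{X},\mathsf{Y})$, and at $p$ this equals $\pm\,\omega(\mathsf{X},\mathcal{J}\mathsf{X})=\pm\,\mathcal{G}(\mathsf{X},\mathsf{X})\big|_p\neq 0$ by positive-definiteness of the metric. On the other hand, $\dutchcal{C}$ is bilinear and $\ggoth$ is Abelian, so~(\ref{PoissonBracket}) gives $\{\upmu_\mathsf{X},\upmu_\mathsf{Y}\}=\upmu_{[\mathsf{X},\mathsf{Y}]}+\dutchcal{C}(\mathsf{X},\mathsf{Y})=\upmu_0+0=0$ identically. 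Here $\upmu_0=0$, since the vector field $0$ has $f_0=0$. This is the desired contradiction, so the $\mathsf{V}_A^{(1,0)}$ are linearly independent over $\mathbb{C}$ at every point, in the sense of the Definition.

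The main care point is the step identifying complex dependence of the $(1,0)$ parts with $\mathsf{Y}=\pm\mathcal{J}\mathsf{X}$ at $p$, together with the nondegeneracy argument ensuring $\mathsf{X}\big|_p\neq0$. The sign conventions for $\mathcal{J}$ and for the Poisson bracket are irrelevant, because only nonvanishing of $\mathcal{G}(\mathsf{X},\mathsf{X})$ is used. It also matters that the Definition demands independence at each point, so that a single point $p$ suffices for the contradiction.
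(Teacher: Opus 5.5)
Your proof is correct, and it takes a different route from the paper.

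\textbf{The paper's argument.} It works in coordinates. It picks $m=\mathrm{rk}(\mathsf{V}_A^i)$ fields whose $(1,0)$ parts are independent and uses Proposition~\ref{linindepvectorfieldsprop} to make $\mathcal{K}$ invariant under them. It then notes that $g_{A\bar{B}}=\nabla^{(1,0)}_{\mathsf{V}_A}\nabla^{(0,1)}_{\mathsf{V}_B}\mathcal{K}$ is real, symmetric and positive definite, and writes the remaining $(1,0)$ parts as holomorphic combinations $\sum_B\alpha^B_k\mathsf{V}_B^{(1,0)}$. From $\dutchcal{C}=0$ it reads off $\sum_B g_{A\bar{B}}(\bar{\alpha}^B_k-\alpha^B_k)=0$. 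Positivity makes the $\alpha$'s real, holomorphicity makes them constant, and this contradicts the independence of the $\mathsf{V}_A$ in $\ggoth$.

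\textbf{Your argument.} You use the coordinate-free fact that, for commuting Killing fields, $\dutchcal{C}(\mathsf{X},\mathsf{Y})=\pm\,\omega(\mathsf{X},\mathsf{Y})$. So $\dutchcal{C}\equiv 0$ says that the pointwise span of $\ggoth$ is $\omega$-isotropic. An isotropic subspace meets its $\mathcal{J}$-image only in $0$, because $\omega(v,\mathcal{J}v)=\mathcal{G}(v,v)>0$. That is, the span is totally real, which is exactly $\mathbb{C}$-independence of the $(1,0)$ parts.

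\textbf{Comparison.} The positivity input is the same in both proofs. The paper's $g_{A\bar{B}}$ is the restriction of $\mathcal{G}$ to that span, and its identity is just $\omega(\mathsf{V}_A,\mathsf{V}_{m+k})=0$ written out. Your version avoids the invariant potential, the rectification, the choice of a non-degenerate minor, and holomorphicity of the coefficients. It also gives independence at every point of $\mathrm{U}$, which is what the pointwise Definition and the later appeal to Proposition~\ref{linindepvectorfieldsprop} require. The paper's rank argument tacitly assumes constant rank, so it really controls only the generic rank. In exchange, you rely on pointwise real independence of the $\mathsf{V}_A$, which the statement grants via the Definition. The paper uses only their independence as elements of $\ggoth$, and it stays within the language of (non)invariant K\"ahler potentials used throughout.

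A small simplification: your case split for $\mathsf{X}|_p\neq 0$ is unnecessary. If $\mathsf{X}|_p=0$, then $\mathsf{Y}|_p=\pm\mathcal{J}\mathsf{X}|_p=0$, so pointwise real independence immediately forces all $c_A=0$.
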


\begin{proof}
Suppose that the rank $\mathrm{rk}(\mathsf{V}_A^i)=m\leq n$. Assume additionally that the first $m$ rows are linearly independent (or else we permute the labels). 
It then follows from Proposition~\ref{linindepvectorfieldsprop} that one can make the K\"ahler potential invariant w.r.t. the vector fields $\mathsf{V}_1, \ldots, \mathsf{V}_m$. In this case the Hermitian matrix
\begin{align}
    g_{A\smallthickbar{B}}:= \nabla_{\mathsf{V}_A}^{(1,0)}\nabla_{\mathsf{V}_{B}}^{(0,1)} \mathcal{K}\,,\quad\quad A, B=1, \ldots, m
\end{align}
is, in fact, symmetric (and real). Indeed, due to the invariance of the  potential ${\nabla_{\mathsf{V}_{B}}^{(0,1)} \mathcal{K}=-\nabla_{\mathsf{V}_{B}}^{(1,0)} \mathcal{K}}$, which leads to   $g_{A\smallthickbar{B}}=-\nabla_{\mathsf{V}_{B}}^{(1,0)}\nabla_{\mathsf{V}_{A}}^{(1,0)} \mathcal{K}$ due to the commutativity of the vector fields, and finally $g_{A\smallthickbar{B}}=g_{B\smallthickbar{A}}$ again due to the invariance of the potential w.r.t. $\mathsf{V}_{A}$.

Now, the $(1,0)$-parts of the remaining vector fields take the form
\begin{align}
   \mathsf{V}_{m+k}^{(1,0)}=\sum_{B=1}^m\,\alpha^B_{k} 
   \,\mathsf{V}_{B}^{(1,0)}\,,\quad\quad k=1, \ldots n-m\,,
\end{align}
where the coefficients $\alpha^B_{m+k}$ are holomorphic.  Computing $c(\mathsf{V}_A, \mathsf{V}_{m+k})$ from the definition (see (\ref{CentralExtension})), one finds
\begin{align}\label{vancmixed}
    &0=i\,c(\mathsf{V}_A, \mathsf{V}_{m+k})=\nabla_{\mathsf{V}_A}^{(1,0)} \nabla_{\mathsf{V}_{m+k}} \mathcal{K} - \nabla_{\mathsf{V}_{m+k}}^{(1,0)} \nabla_{\mathsf{V}_A} \mathcal{K}=\\ \nonumber
    &=\nabla_{\mathsf{V}_A}^{(1,0)}\nabla_{\mathsf{V}_{m+k}}^{(0,1)} \mathcal{K} - \nabla_{\mathsf{V}_{m+k}}^{(1,0)} \nabla_{\mathsf{V}_A}^{(0,1)} \mathcal{K}=\sum_{B=1}^m\,g_{A\smallthickbar{B}}\,\bigl(\thickbar{\alpha}^{\smallthickbar{B}}_{k}-\alpha^B_{k}\bigr)\,.
\end{align}
In passing to the second line, we  used the commutativity of the vector fields, whereas in writing the last equality, we used the symmetry of $g_{A\smallthickbar{B}}$.

The fact that the metric is positive-definite implies that $g_{A\smallthickbar{B}}$ is positive-definite as well. It then follows from~(\ref{vancmixed}) that 
$\alpha_{k}^B-\widebar{\alpha}_{k}^{\smallthickbar{B}}\equiv 0$. By holomorphicity, the latter  implies that $\alpha_{k}^B$ are real constants, so that the Killing vector fields $\mathsf{V}_{m+k}$ are linear  combinations of the $\mathsf{V}_A$'s. As such, they may be dropped. 
It follows that the cocycle is trivial by Proposition~\ref{linindepvectorfieldsprop}. 
\end{proof}

The following corollary will be important when we discuss the gauging of  centrally extended isometry algebras:

\begin{corollary}
For Abelian $\mathfrak{g}$ the  K\"ahler potential can be made invariant (by a K\"ahler transformation) if and only if $\dutchcal{C}\equiv 0$.
\end{corollary}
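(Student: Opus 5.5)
The corollary has two directions, and I will treat them separately, relying on Proposition~\ref{abeliantrivcocycleprop} for the nontrivial one.

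For the ``only if'' direction, suppose a K\"ahler transformation $\mathcal{K}\mapsto \mathcal{K}+h(\mathrm{z})+\widebar{h}(\smallthickbar{\mathrm{z}})$ makes the potential invariant. Then the shifted functions $f_\mathsf{V}+\nabla_\mathsf{V} h$ are purely imaginary constants, and they can be taken to vanish, since a constant $c+\bar c$ in~(\ref{Kvar}) is zero whenever $c\in i\mathbb{R}$. As noted after the first Proposition, $\dutchcal{C}$ does not change under $f_\mathsf{V}\mapsto f_\mathsf{V}+\nabla_\mathsf{V} h$. So I may evaluate~(\ref{CentralExtension}) with $f\equiv 0$. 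Because $\ggoth$ is Abelian, $f_{[\mathsf{W},\mathsf{V}]}=f_0=0$ as well. This gives $\dutchcal{C}\equiv 0$.

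For the ``if'' direction, assume $\dutchcal{C}\equiv 0$. I first choose a basis $\mathsf{V}_1,\ldots,\mathsf{V}_n$ of $\ggoth$ that is linearly independent over $\mathbb{R}$, which is possible because $\ggoth$ is finite-dimensional. Proposition~\ref{abeliantrivcocycleprop} then says that their $(1,0)$ parts are linearly independent over $\mathbb{C}$. By Proposition~\ref{linindepvectorfieldsprop} this gives $f_{\mathsf{V}_A}=\nabla_{\mathsf{V}_A}h$ for a single holomorphic $h$. The map $\mathsf{V}\mapsto f_\mathsf{V}$ can be taken linear: $\nabla_\mathsf{V}\mathcal{K}$ is linear in $\mathsf{V}$, and $f_\mathsf{V}$ is fixed by it up to imaginary constants, which I fix by choosing $f$ on the basis and extending linearly. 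With that choice, $f_\mathsf{V}=\nabla_\mathsf{V}h$ holds for all $\mathsf{V}\in\ggoth$. Shifting $\mathcal{K}\mapsto\mathcal{K}-h-\widebar{h}$, as in Section~\ref{Vanishing central extension}, then gives $\nabla_\mathsf{V}\mathcal{K}=0$ for all $\mathsf{V}$.

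The main point that needs care is the bookkeeping of imaginary constants in $f$. There is also the domain issue: the rectification in Proposition~\ref{linindepvectorfieldsprop} and the Poincar\'e lemma are local, so the statement should be understood on a sufficiently small (for instance simply connected) domain $\mathrm{U}$, in line with the local viewpoint adopted earlier in the paper.
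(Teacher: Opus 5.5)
Your proof is correct and takes the route the paper intends. The ``only if'' direction uses the invariance of $\dutchcal{C}$ under $f_{\mathsf{V}}\mapsto f_{\mathsf{V}}+\nabla_{\mathsf{V}}h$, and the ``if'' direction combines Proposition~\ref{abeliantrivcocycleprop}, Proposition~\ref{linindepvectorfieldsprop} and the shift $\mathcal{K}\mapsto\mathcal{K}-h-\widebar{h}$ of Section~\ref{Vanishing central extension}; your handling of the imaginary constants in $f$ is right, since they drop out of $\dutchcal{C}$ when $\ggoth$ is Abelian.

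One caveat, which the paper also leaves implicit: Proposition~\ref{abeliantrivcocycleprop} needs pointwise linear independence in the sense of the paper's Definition, and an abstract basis of $\ggoth$ need not have it everywhere (a rotation field vanishes at its centre). So your small domain $\mathrm{U}$ should be taken around a generic point.
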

In the next section, we provide some examples where the central extension cannot be removed.

\subsection{Flat space examples.}
Let us start from a general flat Hermitian metric in~$\CC^N$, with its corresponding K\"ahler form and K\"ahler potential:
\begin{align}
ds^2=\sum\limits_{j,k=1}^N\,a_{j\widebar{k}}d\mathrm{z}_j d\smallthickbar{\mathrm{z}}_k\,,\quad \omega={i\over 2}\sum\limits_{j,k=1}^N\,a_{j\widebar{k}}d\mathrm{z}_j \wedge d\smallthickbar{\mathrm{z}}_k\,,\quad \mathcal{K}=\sum\limits_{j,k=1}^N\,a_{j\widebar{k}}\mathrm{z}_j\smallthickbar{\mathrm{z}}_k\,,
\end{align}
where $||a_{j\widebar{k}}||_{j,\widebar{k}}$ is a Hermitian positive-definite matrix. Now let us consider the isometry group generated by (holomorphic) shift isometries 
\begin{align}
    \mathrm{z}_j\mapsto \mathrm{z}_j+n_j\,\epsilon\,,\quad\quad n\in \CC^N\quad \textrm{and}\quad \epsilon \in \mathbb{R}\,.
\end{align}
The corresponding real vector field and the moment map for this action have the form
\begin{align}
    \mathsf{V}^{(n)}=n_j {\dd\over \dd \mathrm{z}_j}+\smallthickbar{n}_j{\dd\over \dd \smallthickbar{\mathrm{z}}_j}\,,\quad\quad\quad \upmu_n=i \sum\limits_{j, k}a_{j\widebar{k}}\Bigl(n_j \smallthickbar{\mathrm{z}}_k-\mathrm{z}_j \smallthickbar{n}_k\Bigr)\,.
\end{align}
The Poisson bracket of moment maps corresponding to the $n-$ and $m-$ shifts is
\begin{align}
    \bigl\{\upmu_n, \upmu_m \bigr\}=\dutchcal{C}(\mathsf{V}^{(n)}, \mathsf{V}^{(m)})=i \bigl((m,n)-(n,m)\bigr)\,,\quad\quad \textrm{where} \quad\quad(m, n)=a_{j\widebar{k}}n_j \smallthickbar{m}_k\,.
\end{align} 

\vspace{0.3cm}
The two main examples where the central extension is non-vanishing are as follows:

\vspace{0.3cm} \noindent
\textit{1)} $\mathcal{M}=\CC$, 
where $n=1, m=i$ and $\bigl\{\upmu_n, \upmu_n \bigr\}= 2 a_{1\widebar{1}}$. In this case the fact that the central extension must be present follows from the form of the vector fields, i.e. $\mathsf{V}_1=\mathrm{Re}\left( {\dd_{\mathrm{z}_1}}\right)$ and $\mathsf{V}_2=\mathrm{Re}\bigl(i {\dd_{\mathrm{z}_1}}\bigr)$. Indeed, $\mathsf{V}_1^{(1,0)}$ and $\mathsf{V}_2^{(1,0)}$ are  linearly dependent, so that, by Proposition~\ref{abeliantrivcocycleprop}, \textit{any} K\"ahler metric invariant under these vector fields will inevitably lead to a central extension.

\vspace{0.3cm} \noindent
\textit{2)} 
$\mathcal{M}=\CC^2$, where $n=\begin{pmatrix}
        i & 0
    \end{pmatrix}, \, m=\begin{pmatrix}
        0 & i
    \end{pmatrix}$ and $\Bigl\{\upmu_n, \upmu_m \Bigr\}=-2\, \mathrm{Im}(a_{1\widebar{2}})$. Unlike the example of point \textit{1)}, here the presence of a central extension is not dictated by the form of the vector fields but depends on the choice of the K\"ahler metric.

\section{Metrics admitting a central extension}\label{metricssec}

Let us denote by $\bigl\{\mathsf{V}_A \bigr\}_{A=1}^n$ the holomorphic Killing vector fields generating the Abelian Lie algebra~$\ggoth$. 
Then one has
\begin{align}
    &\Bigl[\mathsf{V}_A,\mathsf{V}_B \Bigr] = 0\,,&& \nabla_{\mathsf{V}_A}f_{\mathsf{V}_B} - \nabla_{\mathsf{V}_B} f_{\mathsf{V}_A} = i \,\dutchcal{C}_{AB} \equiv i\,\dutchcal{C}(\mathsf{V}_A,\mathsf{V}_B)\,, \label{abelianAlgebra}
\end{align}
One can perform a linear transformation on the vector fields, using a matrix ${g\in \mathrm{GL}(n,\mathbb{R})}$, i.e. $\mathsf{V}_B \mapsto \mathsf{V}'_A = g_{AB}\mathsf{V}_B$. The function $f_\mathsf{V}$ is linear in $\mathsf{V}$, therefore, the matrix $\dutchcal{C}$ of $\dutchcal{C}_{AB}$'s transforms under the map as $\dutchcal{C}\mapsto \dutchcal{C}' = g\,\dutchcal{C}\,g^t$. Using this transformation, one brings the anti-symmetric matrix $\dutchcal{C}$ to the canonical form (here $\mathds{1}_n$ is an $n\times n$ identity matrix)
\begin{align}
\dutchcal{C}_{k, n} = c\,
    \begin{pmatrix}
        \mathds{1}_k \otimes \varepsilon & \\
        & 0\cdot \mathds{1}_{n-2k}
    \end{pmatrix}\,,\quad\quad \textrm{where}\quad\quad  \varepsilon = \begin{pmatrix}
    0 & 1\\
    -1 & 0
\end{pmatrix}\,.\label{standardFormOfC}
\end{align}
To be able to keep track of the terms sourcing the central extension we prefer to keep the parameter $c$ in $\dutchcal{C}$.

\subsection{The Heisenberg algebra.}\label{HeisenbergAlgebra}
It is clear from (\ref{standardFormOfC}) that the most basic example is given by the centrally extended two-dimensional abelian algebra, i.e. the Heisenberg algebra~$\mathfrak{h}_3$: this corresponds to the case  $n=2$ and $k=1$. 

Let us call the respective vector fields  $\mathsf{V}$ and $\mathsf{W}$. There are two possible local models for these vector fields.   
The \textit{first} case is when $\mathsf{V}^{(1,0)}$ and $\mathsf{W}^{(1,0)}$ are linearly independent: then they can be brought to the form
\begin{align}\label{linindepvecfields}
   (1)\quad\quad  \mathsf{V}^{(1,0)} = {\dd \over \dd \mathrm{z}}\,,\quad\quad \mathsf{W}^{(1,0)} = {\dd \over \dd \mathrm{w}}
\end{align}
using some holomorphic coordinates $\mathrm{z}$ and $\mathrm{w}$.
The \textit{second} case is when they are linearly dependent, so that 
\begin{align}\label{lindepvecfields}
   \hspace{1cm} (2)\quad\quad  \mathsf{V}^{(1,0)} = {\dd \over \dd \mathrm{z}}\,,\quad\quad \mathsf{W}^{(1,0)} = \uplambda(\mathrm{r})\,{\dd \over \dd \mathrm{z}}\,,
\end{align}
where $\uplambda(\mathrm{r})$ is a holomorphic function of the remaining coordinates\footnote{We assume that $\uplambda(r)$ is not a constant real number, since otherwise $\mathsf{W}$ is proportional to $\mathsf{V}$.}. 
One can now construct a local model of the K\"ahler potential in each of the two cases:
\begin{prop}\label{HeisenbergProp}
    Let $\mathsf{V}$ and $\mathsf{W}$ be commuting Killing vector fields, $\bigl[\mathsf{V}, \mathsf{W} \bigr]=0$, and the corresponding moment maps $\upmu_\mathsf{V}, \upmu_\mathsf{W}$ satisfy $\bigl\{\upmu_\mathsf{V}, \upmu_\mathsf{W} \bigr\}=\dutchcal{C}\left(\mathsf{V}, \mathsf{W}\right)\equiv c$. Then the K\"ahler potential may be brought to one of the following forms:
    \begin{align}\label{abelianKah1}
        &\displaystyle (1)\quad \mathcal{K}=-{ic\over 2}(\mathrm{w}+\smallthickbar{\mathrm{w}})(\mathrm{z}-\smallthickbar{\mathrm{z}})+\mathcal{K}_0(\mathrm{w}-\smallthickbar{\mathrm{w}}, \mathrm{z}-\smallthickbar{\mathrm{z}}, \cdots)\quad\quad \textrm{or}\\ \label{abelianKah2}
        &\displaystyle (2)\quad\mathcal{K}={ic\over 2}{\bigl(i(\mathrm{z}-\smallthickbar{\mathrm{z}})\bigr)^2\over \uplambda(\mathrm{r})-\smallthickbar{\uplambda}(\smallthickbar{\mathrm{r}})}+
    \mathcal{K}_0(\mathrm{r}, \smallthickbar{\mathrm{r}})\,,
    \end{align}
    depending on whether $\mathsf{V}^{1,0}, \mathsf{W}^{1,0}$ are linearly independent (case (1)) or linearly dependent (case (2)).
\end{prop}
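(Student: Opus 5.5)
The plan is to pick adapted coordinates, write down an explicit ``model'' potential that produces the central extension $c$, and reduce the remainder to the situation of vanishing central extension. There the earlier results (Proposition~\ref{linindepvectorfieldsprop} and its consequence, the Corollary) let us make the potential invariant. Throughout I use the definitions (\ref{Kvar}) and (\ref{CentralExtension}). For commuting fields the latter reads $\dutchcal{C}(\mathsf{V},\mathsf{W})=i\bigl(\nabla_\mathsf{W} f_\mathsf{V}-\nabla_\mathsf{V} f_\mathsf{W}\bigr)$, up to the sign convention fixed by (\ref{PoissonBracket}); I track the sign only at the end.

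\emph{Case (1).} Work in the coordinates (\ref{linindepvecfields}), so that $\mathsf{V}=\dd_\mathrm{z}+\dd_{\smallthickbar{\mathrm{z}}}$ and $\mathsf{W}=\dd_\mathrm{w}+\dd_{\smallthickbar{\mathrm{w}}}$. Take
\begin{align}
\mathcal{K}_1:=-\tfrac{ic}{2}(\mathrm{w}+\smallthickbar{\mathrm{w}})(\mathrm{z}-\smallthickbar{\mathrm{z}}).
\end{align}
Direct differentiation gives $\nabla_\mathsf{V}\mathcal{K}_1=0$, so $f_\mathsf{V}=0$. It also gives $\nabla_\mathsf{W}\mathcal{K}_1=-ic\,\mathrm{z}+ic\,\smallthickbar{\mathrm{z}}$, so $f_\mathsf{W}=-ic\,\mathrm{z}$. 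Hence $\mathcal{K}_1$ carries exactly the central charge $c$; I will check the sign against (\ref{CentralExtension}) and, if needed, flip it by $\mathrm{z}\leftrightarrow\smallthickbar{\mathrm{z}}$ conventions.

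The cochain $f$ is linear in $\mathcal{K}$. Therefore $\mathcal{K}-\mathcal{K}_1$ is a local potential (not necessarily positive) whose associated cocycle has vanishing central extension. Since $\dd_\mathrm{z},\dd_\mathrm{w}$ are linearly independent, Proposition~\ref{linindepvectorfieldsprop} applies; its proof uses only the Poincar\'e lemma and not positivity. So there is a holomorphic $h$ such that $\mathcal{K}-\mathcal{K}_1-h-\widebar h$ is invariant under the real shifts of $\mathrm{z}$ and $\mathrm{w}$. Such a function depends only on $\mathrm{z}-\smallthickbar{\mathrm{z}}$, $\mathrm{w}-\smallthickbar{\mathrm{w}}$ and the remaining coordinates, which is (\ref{abelianKah1}).

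\emph{Case (2).} Use the coordinates (\ref{lindepvecfields}). First apply Proposition~\ref{linindepvectorfieldsprop} to $\mathsf{V}$ alone and make $\mathcal{K}$ invariant under $\mathsf{V}$. Then $\mathcal{K}=\mathcal{K}(u,\mathrm{r},\smallthickbar{\mathrm{r}})$ with $u=\mathrm{z}-\smallthickbar{\mathrm{z}}$, and $f_\mathsf{V}$ is an imaginary constant, which can be dropped. The central extension condition becomes $\dd_\mathrm{z} f_\mathsf{W}=\mp ic$, so
\begin{align}
f_\mathsf{W}=\mp ic\,\mathrm{z}+g(\mathrm{r})
\end{align}
with $g$ holomorphic. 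On the other hand, $\nabla_\mathsf{W}\mathcal{K}=(\uplambda-\widebar\uplambda)\,\dd_u\mathcal{K}$. Equating this with $f_\mathsf{W}+\widebar f_\mathsf{W}$ gives
\begin{align}
\dd_u\mathcal{K}=\frac{\mp ic\,u+g+\widebar g}{\uplambda-\widebar\uplambda},
\end{align}
where $\uplambda-\widebar\uplambda\neq0$ because $\uplambda$ is not real. Integrating in $u$ yields the quadratic term of (\ref{abelianKah2}), since $\tfrac{ic}{2}(iu)^2=-\tfrac{ic}{2}u^2$. It also yields a linear term $u\,(g+\widebar g)/(\uplambda-\widebar\uplambda)$ and an integration constant $\mathcal{K}_0(\mathrm{r},\smallthickbar{\mathrm{r}})$.

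The main obstacle is removing the linear term. I would do it with the holomorphic change of coordinates $\mathrm{z}\mapsto\mathrm{z}+h(\mathrm{r})$. This preserves the normal form (\ref{lindepvecfields}), because $\uplambda$ depends only on $\mathrm{r}$. It shifts $u\mapsto u+h-\widebar h$, so completing the square produces a cross term proportional to $c\,u\,(h-\widebar h)/(\uplambda-\widebar\uplambda)$. Choosing $h=\pm g/(ic)$ cancels the linear term, because $ic(h-\widebar h)=\pm(g+\widebar g)$ with $g+\widebar g$ real. The leftover pieces depend only on $\mathrm{r},\smallthickbar{\mathrm{r}}$ and are absorbed into $\mathcal{K}_0$. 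This gives (\ref{abelianKah2}).
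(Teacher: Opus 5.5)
Your proposal is correct and is essentially the paper's proof. Case (2) matches it step for step: make $\mathcal{K}$ invariant under $\mathsf{V}$, integrate $\dd_{\mathrm{z}}f_{\mathsf{W}}$, integrate in $\mathrm{z}-\smallthickbar{\mathrm{z}}$, and remove the linear term by $\mathrm{z}\mapsto\mathrm{z}+g/(ic)$. In Case (1), subtracting the model term and applying Proposition~\ref{linindepvectorfieldsprop} to the (non-positive) remainder just repackages the paper's integration of $f_{\mathsf{W}}=-ic\,\mathrm{z}+f_0$ followed by the K\"ahler transformation with $\kappa(\mathrm{w},\mathrm{r})$.

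The only loose end is the sign, which is purely conventional: the paper's proof uses $\dutchcal{C}(\mathsf{V},\mathsf{W})=i\,\dd_{\mathrm{z}}f_{\mathsf{W}}$, which is opposite to a literal reading of (\ref{CentralExtension}). The sign is therefore fixed by choosing a convention, not by a $\mathrm{z}\leftrightarrow\smallthickbar{\mathrm{z}}$ swap, which would not be holomorphic.
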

\begin{proof}
As before, we will set $f_\mathsf{V}=0$, which means that  $\mathcal{K}=\mathcal{K}(\mathrm{z}-\smallthickbar{\mathrm{z}}, \ldots)$. The central extension is then $\dutchcal{C}(\mathsf{V}, \mathsf{W})=i \dd_{\mathrm{z}} f_\mathsf{W}=c \in \mathbb{R}$. This integrates to
\begin{align}
   f_\mathsf{W}=-ic\, \mathrm{z}+f_0(\,\cdots)\,, 
\end{align}
where $\cdots$ denotes all coordinates other than $z$. Thus, ${\nabla_\mathsf{W} \mathcal{K}=-ic\,(\mathrm{z}-\smallthickbar{\mathrm{z}})+f_0+\widebar{f_0}}$. 

We will now separately consider the two cases above. In the first case, assuming $f_0=f_0(\mathrm{w}, \mathrm{r})$, the last two terms in $\nabla_\mathsf{W} \mathcal{K}$ can be compensated by a K\"ahler transformation with the function $\kappa(\mathrm{w},\mathrm{r})=\int^\mathrm{w}\,d\mathrm{x} \,f_0(\mathrm{x}, \mathrm{r})$, so that we are left with the equation 
\begin{align}
    \left({\dd \over \dd \mathrm{w}}+{\dd \over \dd \smallthickbar{\mathrm{w}}}\right) \mathcal{K}=-ic\,(\mathrm{z}-\smallthickbar{\mathrm{z}})\,,
\end{align}
whose solution is~(\ref{abelianKah1}).  
In the second case we have 
\begin{align}
    &\nabla_\mathsf{W} \mathcal{K}=(\uplambda - \smallthickbar{\uplambda})\, \mathcal{K}'(\mathrm{z}-\smallthickbar{\mathrm{z}}, \cdots)=-ic\,(\mathrm{z}-\smallthickbar{\mathrm{z}})+(f_0+\widebar{f_0})\,,\quad\quad \textrm{so that}\\ \label{Kahpottoruslinterm}
    &\displaystyle\mathcal{K}=-{ic\over 2(\uplambda-\smallthickbar{\uplambda})}(\mathrm{z}-\smallthickbar{\mathrm{z}})^2+(\mathrm{z}-\smallthickbar{\mathrm{z}})\frac{f_0+\widebar{f_0}}{\uplambda-\smallthickbar{\uplambda}}+\mathcal{K}_0(\cdots)\,,
\end{align}
where $f_0$ and $\mathcal{K}_0$ are functions of the remaining coordinates. Shifting $\mathrm{z}\mapsto \mathrm{z}+{1\over i c} f_0$ one brings this to the canonical form\footnote{For $c=0$ the non-invariant piece of the potential is an example of a non-trivial cohomology class $\bigl[ f \bigr]\neq 0$. However, as was shown in the proof of Proposition~\ref{abeliantrivcocycleprop}, it leads to a non-positive metric.}~(\ref{abelianKah2}), 
    which has the structure of a fibration. 
\end{proof} 

\subsubsection{Examples.} Let us consider some regular geometries illustrating case $(2)$. First, if $\uplambda\equiv \mathrm{const.}$, one has the metric product geometry $\CC_z\times \cdots$. Assuming periodic identifications of $z$, one may replace $\CC_z$ with $\CC_z^\ast$ or $\mathbb{T}^2_z$.

Another interesting case is $\uplambda(\mathrm{r})= \mathrm{r}$. Here the K\"ahler potential~(\ref{abelianKah2}), and hence the resulting metric, has an apparent singularity at $\mathrm{Im}(\mathrm{r})=0$. In order for the metric to be complete, this singularity has to be at infinite distance. If $K_0$ is regular at $\mathrm{Im}(r)=0$, this does not hold at $\mathrm{Im}(z)=0$. Thus, we are led to consider $K_0$ that is itself singular at $\mathrm{Im}(r)=0$. The mildest singularity that resolves the issue is such that $K_0\sim \log{(\mathrm{Im}(r))}$, which gives the potential
\begin{align}\label{KahlerBerndtPotential}
    \mathcal{K}=c\,\frac{(i(\mathrm{z}-\widebar{\mathrm{z}}))^2}{i(\mathrm{r}-\widebar{\mathrm{r}})}+a\,\log{(i(\mathrm{r}-\widebar{\mathrm{r}}))}\,,\quad\quad a=\mathrm{const.}>0
\end{align}

This geometry was discovered in\footnote{A reprint of the latter paper can be found in~\cite{KahlerWorks}. See also~\cite{Molitor} for a historical review.}~\cite{Berndt, KahlerIndividuum}  in the study of automorphic forms and later reappeared in the context of supergravity in~\cite{Louis}. 

As it stands, the potential~(\ref{KahlerBerndtPotential}) gives a metric on the total space of a $\CC$-bundle, with fiber coordinate $\mathrm{z}$, over the hyperbolic plane $\mathbb{H}_2$, with coordinate $\mathrm{r}$. There are several obvious isometries:
\begin{align}
    &\mathrm{z}\mapsto \mathrm{z}+\epsilon_1+\mathrm{r}\,\epsilon_2\,,\quad \mathrm{r}\mapsto \mathrm{r}+\epsilon_3\,,\quad \mathrm{r}\mapsto s^2\,\mathrm{r}\,, \mathrm{z}\mapsto s\,\mathrm{z}\,,\quad \\ \nonumber  &\quad\quad\quad\quad\quad\quad\quad\quad\quad\quad\quad \textrm{where}\quad \epsilon_{1,2,3}\in \mathbb{R}\,, s\in \mathbb{R}_+
\end{align}
Remarkably, the metric is also invariant under the inversion $\mathrm{r}\mapsto -{1\over \mathrm{r}}, \mathrm{z}\mapsto {\mathrm{z}\over \mathrm{r}}$ (the shift in the K\"ahler potential being $\delta \mathcal{K}=i\,c\,\bigl({\smallthickbar{\mathrm{z}}^2\over \smallthickbar{\mathrm{r}}}-{\mathrm{z}^2\over \mathrm{r}}\bigr)-a\,\log{(\mathrm{r}\smallthickbar{\mathrm{r}})}$). Together with the above symmetries, this implies invariance w.r.t. the full group $\mathrm{SL}(2, \mathbb{R})\ltimes \CC$. As a result, the underlying manifold is a homogeneous space
\begin{align}
    \mathcal{M}_{\mathrm{SJ}}:=\frac{\mathrm{SL}(2, \mathbb{R})\ltimes \CC}{\mathrm{SO}(2)}
\end{align}
The central extension in question affects the $\CC$-factor: since the Lie algebra of Killing vector fields is $\ggoth=\mathfrak{sl}_2(\mathbb{R})\ltimes \mathbb{R}^2$, its centrally extended counterpart is ${\widehat{\ggoth}=\mathfrak{sl}_2(\mathbb{R})\ltimes \hgoth_3}$, where $\hgoth_3$ is the Heisenberg algebra.

Let us also mention that one could  convert the complex plane $\CC$ into a 2-torus with modular parameter $\mathrm{r}$ by imposing the identifications
\begin{align}
    z\sim z+m +n\,\mathrm{r}\,,\quad\quad m,n\in \mathbb{Z}\,.
\end{align}
The quotient manifold $\mathcal{M}_{\mathrm{SJ}}\bigl/\mathbb{Z}^2$ is known as the universal elliptic curve. It is the fiber bundle over $\mathbb{H}_2$, whose fiber at a given point $\mathrm{r}\in \mathbb{H}_2$ is the elliptic curve with modulus~$\mathrm{r}$.

\subsection{The central extension on quotient manifolds.}
 
It turns out that case $(2)$ of Proposition~\ref{HeisenbergProp} can be reduced to a special subcase of case $(1)$. Indeed, suppose one has the linearly dependent vector fields~(\ref{lindepvecfields}) and a K\"ahler potential $\mathcal{K}(\mathrm{z}, \smallthickbar{\mathrm{z}}, \mathrm{r}, \smallthickbar{\mathrm{r}})$ for which they are Killing. Introducing an auxiliary complex coordinate $\mathrm{u}$, one constructs the extended potential\footnote{We have chosen to add a quadratic function in $\mathrm{u}-\smallthickbar{\mathrm{u}}$ for the sake of simplicity. Adding an arbitrary convex function would lead to the same conclusions.}
\begin{align}\label{primepot}
    \mathcal{K}':=\mathcal{K}+{1\over 2}\Bigl(i(\mathrm{u}-\smallthickbar{\mathrm{u}})\Bigr)^2
\end{align}
as well as the extended vector fields
\begin{align}\label{vectfieldextension}
    \widehat{\mathsf{V}}^{(1,0)} = {\dd \over \dd \mathrm{z}}\,,\quad\quad \widehat{\mathsf{W}}^{(1,0)} = {\dd \over \dd \mathrm{u}}+\uplambda(\mathrm{r})\,{\dd \over \dd \mathrm{z}}\,,
\end{align}
which are Killing for $\mathcal{K}'$. Moreover, $\mathcal{K}'$  admits an additional Killing vector field 
\begin{align}
    {\widehat{\mathsf{U}}^{(1,0)}={\dd \over \dd \mathrm{u}}}=\widehat{\mathsf{W}}^{(1,0)}-\uplambda(\mathrm{r})\, \widehat{\mathsf{V}}^{(1,0)}\,.
\end{align}
It is also easy to characterize the central extension for~$\mathcal{K}'$: one has 
\begin{align}
    \dutchcal{C}(\widehat{\mathsf{V}}, \widehat{\mathsf{W}})=c\,,\quad\quad \dutchcal{C}(\widehat{\mathsf{V}}, \widehat{\mathsf{U}})=\dutchcal{C}(\widehat{\mathsf{W}}, \widehat{\mathsf{U}})=0\,.
\end{align}

It is equally simple to invert this construction: to get back to $\mathcal{K}$ one should take a quotient of $\mathcal{K}'$ w.r.t. the vector field $\widehat{\mathsf{U}}$. An explicit construction of the quotient is elaborated in Appendix~\ref{appKahquot}. 

We will also need the following generalization of this result:
\begin{prop}\label{cquotprop}
    Let $\mathsf{U}$ be a Killing vector field on $\mathcal{M}$ and $\ggoth$ the Lie algebra of remaining isometries, $[\ggoth, \mathsf{U}]=0$. Moreover, assume that $\mathsf{U}\in \mathrm{Ker}\,\dutchcal{C}$, that is $\dutchcal{C}(\bullet, \mathsf{U})=0$. Consider the K\"ahler quotient $ \mathcal{M}\bigl/\!\!\bigl/\mathbb{R}$ by the group $\mathbb{R}$ generated by $\mathsf{U}$, and denote by $\pi_\ast(\ggoth)$ the projection of~$\ggoth$ to the quotient. In this case $\dutchcal{C}\bigl|_{\pi_\ast(\ggoth)}=\dutchcal{C}\bigl|_{\ggoth}$. 
\end{prop}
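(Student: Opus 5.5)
The plan is to work directly with Poisson brackets of moment maps and use the standard fact that the reduced symplectic structure computes brackets of invariant functions. By definition, $\dutchcal{C}(\mathsf{V},\mathsf{W})=\{\upmu_\mathsf{V},\upmu_\mathsf{W}\}-\upmu_{[\mathsf{V},\mathsf{W}]}$. Since $\dutchcal{C}$ is defined through moment maps, it is enough to do three things: exhibit moment maps for the projected vector fields $\pi_\ast(\mathsf{V})$ on $\mathcal{M}/\!\!/\mathbb{R}$, compare their Poisson brackets with those upstairs, and check that the resulting constant is unchanged. The K\"ahler quotient is the symplectic reduction $\upmu_\mathsf{U}^{-1}(\xi)/\mathbb{R}$. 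Because of the uniqueness of the K\"ahler structure on the reduced space, the Kähler-ness plays no role beyond this identification, and the statement is purely symplectic.

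\textbf{Step 1: $\upmu_\mathsf{V}$ is $\mathsf{U}$-invariant, so it descends.} For $\mathsf{V}\in\ggoth$ one has $\nabla_\mathsf{U}\upmu_\mathsf{V}=\{\upmu_\mathsf{V},\upmu_\mathsf{U}\}$ up to sign convention. By (\ref{PoissonBracket}) this equals $\upmu_{[\mathsf{V},\mathsf{U}]}+\dutchcal{C}(\mathsf{V},\mathsf{U})$. The first term vanishes since $[\ggoth,\mathsf{U}]=0$, and the second vanishes by the hypothesis $\mathsf{U}\in\mathrm{Ker}\,\dutchcal{C}$. Hence $\upmu_\mathsf{V}$ is $\mathsf{U}$-invariant. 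By the same bracket, $\mathsf{V}$ preserves $\upmu_\mathsf{U}$, hence preserves the level set, and it commutes with $\mathsf{U}$. So $\upmu_\mathsf{V}$ restricted to $\upmu_\mathsf{U}^{-1}(\xi)$ descends to a function $\underline{\upmu}_\mathsf{V}$, and $\mathsf{V}$ descends to $\pi_\ast\mathsf{V}$. This is exactly where the kernel hypothesis is used. Without it, $\upmu_\mathsf{V}$ would shift linearly along $\mathsf{U}$-orbits and would fail to descend, and worse, $\mathsf{V}$ would not preserve the level set.

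\textbf{Step 2: $\underline{\upmu}_\mathsf{V}$ is the moment map of $\pi_\ast\mathsf{V}$.} The reduced form satisfies $\iota^\ast\omega=\pi^\ast\omega_{\mathrm{red}}$. Contracting with $\mathsf{V}$, which is tangent to the level set, gives $\pi^\ast(i_{\pi_\ast\mathsf{V}}\omega_{\mathrm{red}})=\iota^\ast d\upmu_\mathsf{V}=\pi^\ast d\underline{\upmu}_\mathsf{V}$. Since $\pi$ is a submersion, this yields $i_{\pi_\ast\mathsf{V}}\omega_{\mathrm{red}}=d\underline{\upmu}_\mathsf{V}$.

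\textbf{Step 3: compare brackets.} Use the standard identity $\{\underline{F},\underline{G}\}_{\mathrm{red}}\circ\pi=\{F,G\}|_{\upmu_\mathsf{U}^{-1}(\xi)}$ for $\mathsf{U}$-invariant $F,G$. The proof is short: the Hamiltonian field $X_F$ is tangent to the level set because $X_F\upmu_\mathsf{U}=\pm\nabla_\mathsf{U}F=0$, and it projects to $X_{\underline F}$ by the computation of Step 2. Applying this to $\upmu_\mathsf{V},\upmu_\mathsf{W}$ gives
\begin{align*}
\{\underline{\upmu}_\mathsf{V},\underline{\upmu}_\mathsf{W}\}_{\mathrm{red}}=\underline{\upmu}_{[\mathsf{V},\mathsf{W}]}+\dutchcal{C}(\mathsf{V},\mathsf{W})\,,
\end{align*}
using $\pi_\ast[\mathsf{V},\mathsf{W}]=[\pi_\ast\mathsf{V},\pi_\ast\mathsf{W}]$.

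\textbf{Main obstacle: normalization ambiguity.} Moment maps downstairs are defined only up to additive constants, and so are the local holomorphic functions $f$ in the paper's convention (\ref{momap}). I would handle this by noting that the paper's $\dutchcal{C}$ is attached to the specific moment maps built from a K\"ahler potential. The reduced potential of Appendix~\ref{appKahquot} gives moment maps that differ from $\underline{\upmu}_\mathsf{V}$ by constants $d(\mathsf{V})$. This changes $\dutchcal{C}$ only by $d([\mathsf{V},\mathsf{W}])$, i.e. by a coboundary. In the Abelian case of interest this vanishes, so $\dutchcal{C}|_{\pi_\ast(\ggoth)}=\dutchcal{C}|_{\ggoth}$ exactly; in general the equality holds for the induced normalization. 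A further degenerate case arises if $\pi_\ast\mathsf{V}=0$ for some $\mathsf{V}$, that is, if $\mathsf{V}$ is proportional to $\mathsf{U}$ along the level set. This is consistent with the claim, since then $\dutchcal{C}(\mathsf{V},\bullet)$ equals a multiple of $\dutchcal{C}(\mathsf{U},\bullet)=0$.
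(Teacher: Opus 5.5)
Your proof is correct, but it takes a different route from the paper's.

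\textbf{The paper's route.} It works at the level of K\"ahler potentials:
\begin{enumerate}
\item It rectifies $\mathsf{U}$ and uses $\mathsf{U}\in\mathrm{Ker}\,\dutchcal{C}$ to take $\mathcal{K}$ $\mathsf{U}$-invariant with $\dd_{\mathrm{u}}f_{\mathsf{V}_A}=0$.
\item Via Frobenius and the implicit function theorem, it builds a joint invariant $\mathrm{x}=\mathrm{u}-\smallthickbar{\mathrm{u}}-i\,\mathrm{P}$.
\item It writes the quotient potential as a Legendre transform in $\mathbf{V}$ plus $\xi(\mathbf{V}+\mathrm{P})$.
\item It finds that this potential transforms under $\pi_\ast\mathsf{V}_A$ with $f_{\mathsf{V}_A}-i\xi\sigma_A$ in place of $f_{\mathsf{V}_A}$, where $\sigma_A$ is the $\dd_{\mathrm{u}}$-component of $\mathsf{V}_A$.
\item The extra terms cancel in $\dutchcal{C}$ because they assemble into $\xi$ times the $\dd_{\mathrm{u}}$-component of $[\mathsf{V}_A,\mathsf{V}_B]=0$.
\end{enumerate}

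\textbf{Your route.} You use symplectic reduction directly:
\begin{enumerate}
\item The kernel hypothesis makes $\upmu_{\mathsf{V}}$ descend and makes $\mathsf{V}$ preserve the level set.
\item The descended functions are moment maps of $\pi_\ast\mathsf{V}$.
\item Reduced brackets of invariant functions are restricted brackets.
\end{enumerate}

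\textbf{What each buys.} Your argument is coordinate-free and avoids constructing $\mathrm{P}$ and solving the Legendre equation. It also shows the role of the kernel condition more sharply than the paper: without it, $\mathsf{V}$ would move the level sets of $\upmu_{\mathsf{U}}$. It extends to non-Abelian $\ggoth$ up to coboundaries. What it does not produce is the explicit transformation law of the quotient potential, in particular that the FI parameter shifts the $f$'s by $-i\xi\sigma_A$ while leaving $\dutchcal{C}$ intact. That is the data the paper needs for the later superspace gauging.

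\textbf{Minor points.}
\begin{itemize}
\item For Abelian $\ggoth$ your normalization worry is vacuous, since $\dutchcal{C}(\mathsf{V},\mathsf{W})=\{\upmu_{\mathsf{V}},\upmu_{\mathsf{W}}\}$ is blind to additive constants.
\item The only identification you need is that the Legendre-transformed potential has K\"ahler form $\omega_{\mathrm{red}}$. Appendix~\ref{appKahquot} is not the right reference for this, as it treats only the Heisenberg example.
\item In the degenerate case $\pi_\ast\mathsf{V}=0$, the proportionality argument is unnecessary. Step~2 gives $d\underline{\upmu}_{\mathsf{V}}=0$, so Step~3 already forces $\dutchcal{C}(\mathsf{V},\bullet)=0$.
\end{itemize}
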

\begin{proof}
    At first, let us rectify $\mathsf{U}$, i.e. $\mathsf{U}=\dd_{\mathrm{u}}+\dd_{\smallthickbar{\mathrm{u}}}$. As $\mathsf{U}\in \mathrm{Ker}\,c$, we may assume that the K\"ahler potential $\mathcal{K}$ is invariant under the action of $\mathsf{U}$. Explicitly, ${\mathcal{K} = \mathcal{K}(\mathrm{u}-\smallthickbar{\mathrm{u}},\mathrm{z},\smallthickbar{\mathrm{z}})}$, where $\mathrm{z}$ represents coordinates other than $\mathrm{u}$. We will denote the generators of~$\ggoth$ 
    by~$\bigl\{\mathsf{V}_A\bigr\}_{A=1}^m$. Then, there exists a function $\mathrm{P}(\mathrm{z},\smallthickbar{\mathrm{z}})$ such that $\mathrm{x} = \mathrm{u}-\smallthickbar{\mathrm{u}} - i\,\mathrm{P}(\mathrm{z},\smallthickbar{\mathrm{z}}) 
    $ satisfies the following constraints:
    \begin{align}
        &\nabla_{\mathsf{V}_A}\, \mathrm{x} = 0\,,&& \nabla_{\mathsf{U}}\, \mathrm{x} = 0\,. \label{SystemDiff}
    \end{align}
    
    Indeed, by virtue of the Frobenius theorem \cite{Olver_1995} locally there exist ${n=\mathrm{dim}_\mathbb{R} \mathcal{M}-m-1}$ solutions of (\ref{SystemDiff}), say $\bigl\{F_a(\mathrm{u}-\smallthickbar{\mathrm{u}},\mathrm{z},\smallthickbar{\mathrm{z}})\bigr\}_{a=1}^{n}$, such that their differentials are linearly independent\footnote{A set of differentials $\{dF_a\}_{a=1}^n$ is called linearly independent in an open domain $\mathrm{U}$ iff $d F_1\wedge dF_2\wedge\dots \wedge d F_n \neq 0$ for all points of $\mathrm{U}$.}. In writing $F_a$ as a function of $\mathrm{u}-\smallthickbar{\mathrm{u}}$ we explicitly solved the equation $\nabla_{\mathsf{U}}\,F_a = 0$. Every function of the $F_a$'s is a solution to (\ref{SystemDiff}) as well. Then, the desired function $\mathrm{P}(\mathrm{z},\smallthickbar{\mathrm{z}})$ exists as a consequence of the implicit function theorem iff at least for one of the $F_a$'s its derivative $\dd_\mathrm{y} F_a \neq 0$, where $\mathrm{y}:=\mathrm{u}-\smallthickbar{\mathrm{u}}$. This is the case due to the linear independence of the differentials of the $F_a$'s\footnote{$F_a$'s may be thought of as coordinates on the space of orbits of the isometry group. If $\dd_{\mathrm{y}}F_a = 0$ for all $a$, one of the isometries in $\ggoth$ is generated by $\mathrm{Re}\left(\uplambda(\mathrm{z})\dd_{\mathrm{u}}\right)$ but then by Proposition \ref{linindepvectorfieldsprop} it has a non-trivial central extension with $\mathsf{U}$.}.

    Now, with a given $\mathrm{x}$ we will treat the K\"ahler potential as a function of $\mathrm{x}$ and $\mathrm{z}$'s, i.e. $\mathcal{K}(\mathrm{u}-\smallthickbar{\mathrm{u}},\mathrm{z},\smallthickbar{\mathrm{z}}) =\widehat{\mathcal{K}}\left(\mathrm{x},\mathrm{z},\smallthickbar{\mathrm{z}}\right)$. We will also decompose $\mathsf{V}_A$ into two parts:
    \begin{align}
        \mathsf{V}_A = \widehat{\mathsf{V}}_A + \sigma_A (\mathrm{z}) \frac{\dd}{\dd \mathrm{u} } + \widebar{\sigma}_A (\smallthickbar{\mathrm{z}}) \frac{\dd}{\dd \smallthickbar{\mathrm{u}} }\,,
    \end{align}
    where $\widehat{\mathsf{V}}_A$ does not contain $\dd_\mathrm{u}$ or $\dd_{\smallthickbar{\mathrm{u}}}$ components. Note that $\widehat{\mathsf{V}}_A = \pi_*\left(\mathsf{V}_A\right)$ and 
    \begin{align}
        \label{nablaK}
        \nabla_{\mathsf{V}_A}\mathcal{K} = \nabla_{\widehat{\mathsf{V}}_A} \widehat{\mathcal{K}}\big|_\mathrm{x} = f_{\mathsf{V}_A} + \widebar{f}_{\mathsf{V}_A}
    \end{align}
    because $\nabla_{\mathsf{V}_A} \,\mathrm{x} =0$.

    Now let us turn to the K\"ahler quotient. In order to find the K\"ahler potential on the quotient manifold, one replaces $\mathrm{x}$ with $i\mathbf{V}$ and performs a Legendre transform w.r.t.~$\mathbf{V}$~\cite{Hitchin_1987, LindstromReview}. Thus one should express  $i\mathbf{V}$ as a function of $\mathrm{z},\smallthickbar{\mathrm{z}}$ from the equation
    \begin{align}
        &\frac{\dd}{\dd \mathbf{V}}\widehat{\mathcal{K}}\bigl(i\mathbf{V},\mathrm{z},\smallthickbar{\mathrm{z}}\bigr)+\xi = 0\,,
        \label{equationOnV}
    \end{align}
    where $\xi \in \mathbb{R}$ is the Fayet-Iliopoulos (FI) parameter. 
    The resulting K\"ahler potential on the quotient manifold is
    \begin{align}
        \mathcal{K}' = \widehat{\mathcal{K}}\bigl(i\mathbf{V}(\mathrm{z},\smallthickbar{\mathrm{z}}),\mathrm{z},\smallthickbar{\mathrm{z}}\bigr)+\xi\, \Bigl(\mathbf{V}(\mathrm{z},\smallthickbar{\mathrm{z}})+\mathrm{P}(\mathrm{z},\smallthickbar{\mathrm{z}})\Bigr)\,.
    \end{align}
    Let us now compute the action of $\widehat{\mathsf{V}}_A$ on $\mathcal{K}'$:
    \begin{align}
        \nabla_{\widehat{\mathsf{V}}_A} \mathcal{K}' = f_{\mathsf{V}_A} - i\xi \sigma_A + \widebar{f}_{\mathsf{V}_A} + i\xi\widebar{\sigma}_A + \left(\frac{\dd}{\dd \mathsf{V}} \widehat{\mathcal{K}} + \xi\right)\cdot \nabla_{\widehat{\mathsf{V}}_A}\mathbf{V}(\mathrm{z},\smallthickbar{\mathrm{z}}) =\\
        = f_{\mathsf{V}_A} - i\xi \sigma_A + \widebar{f}_{\mathsf{V}_A} + i\xi\widebar{\sigma}_A\,.\nonumber
    \end{align}
    Here we have used~(\ref{SystemDiff}), (\ref{nablaK}), (\ref{equationOnV}) and the fact that $\dd_\mathrm{u} f_{\mathsf{V}_A} = 0$, which follows from ${\dutchcal{C}\left(\mathsf{V}_A, \mathsf{U}\right) = 0}$. 
    As a result of a direct calculation, one finds
    \begin{align}
        \dutchcal{C}(\mathsf{V}_A,\mathsf{V}_B) - \dutchcal{C}(\widehat{\mathsf{V}}_A, \widehat{\mathsf{V}}_B) = \xi \left(\nabla_{\widehat{\mathsf{V}}_A} \sigma_B (\mathrm{z}) - \nabla_{\widehat{\mathsf{V}}_B} \sigma_A (\mathrm{z})\right) = 0\,,
    \end{align}
    where the last equality is a consequence of  $\bigl[\mathsf{V}_A,\mathsf{V}_B \bigr] = 0$. Therefore, $\dutchcal{C}\bigl|_{\pi_*(\ggoth)}=\dutchcal{C}\bigl|_{\ggoth}$.
\end{proof}

\subsection{General case.}\label{KahlerpotentialGeneralCaseSect}

It is now clear how to treat the general case when the central extension is given by the matrix~(\ref{standardFormOfC}). First, assume that the first $2k$ vector fields, corresponding to the centrally extended block, are such that their $(1,0)$-components are linearly independent. Let us group these fields as 
\begin{align}\label{VWfields}
   &{\mathsf{V}^{(1,0)}_1=\dd_{\mathrm{z}_1}, \;\mathsf{W}^{(1,0)}_1=\dd_{\mathrm{w}_1}, \;\mathsf{V}^{(1,0)}_2=\dd_{\mathrm{z}_2}, \;\mathsf{W}^{(1,0)}_2=\dd_{\mathrm{w}_2}, \;\;\ldots}\quad \\&\textrm{so that}\quad  \dutchcal{C}(\mathsf{V}_A, \mathsf{W}_B)=c\,\delta_{AB}, \quad A, B=1, \ldots k \label{VWcentext}
\end{align}
On top of $\mathsf{V}_A$ and $\mathsf{W}_B$ one has the additional $n-2k$ Killing vector fields $\mathsf{U}_1, \ldots, \mathsf{U}_{n-2k}$, such that $\mathsf{U}_{\bullet}\in \mathrm{Ker}\,\dutchcal{C}$. 
The fact that $\dutchcal{C}(\mathsf{U}_\ell, \mathsf{V}_B)=0$ and $\dutchcal{C}(\mathsf{V}_A, \mathsf{V}_B)=0$ implies that one can make the potential invariant w.r.t. the $\mathsf{U}_\ell$ and $\mathsf{V}_A$ fields simultaneously. In other words,
\begin{align}
    \mathcal{K}=\mathcal{K}\Bigl(\mathrm{z}_1-\smallthickbar{\mathrm{z}}_1, \cdots ; \mathrm{w}, \smallthickbar{\mathrm{w}}, \mathrm{r}, \smallthickbar{\mathrm{r}}\Bigr)\,,
\end{align}
where $\mathrm{r}, \smallthickbar{\mathrm{r}}$ denote all the remaining coordinates. Additionally, due to $\mathsf{U}_\ell$-invariance, 
\begin{align}
    \nabla_{\mathsf{U}_\ell} \mathcal{K}=0\,,\quad\quad \ell=1, \ldots, n-2k
\end{align}
which further constrains the function $\mathcal{K}$. Then, in order to incorporate the central extension~(\ref{VWcentext}), one follows the proof of Proposition~\ref{HeisenbergProp}, and a simple extrapolation of~(\ref{abelianKah1}) gives the local form of the K\"ahler potential:
\begin{align}\label{anomalousKgen}
   \mathcal{K}=-{ic\over 2}\,\sum_{A=1}^k\,\Bigl(\mathrm{w}_A+\smallthickbar{\mathrm{w}}_A \Bigr)\Bigl(\mathrm{z}_A-\smallthickbar{\mathrm{z}}_A \Bigr)+\mathcal{K}_0\Bigl(\mathrm{w}_1-\smallthickbar{\mathrm{w}}_1, \mathrm{z}_1-\smallthickbar{\mathrm{z}}_1, \cdots ; \mathrm{r}, \smallthickbar{\mathrm{r}}\Bigr)+\\  \nonumber +\kappa(\mathrm{w}, \mathrm{r})
    +\smallthickbar{\kappa}(\smallthickbar{\mathrm{w}}, \smallthickbar{\mathrm{r}})\,,
\end{align}
where the function $\kappa$ represents  a K\"ahler transformation featuring in the proof of Proposition~\ref{HeisenbergProp}. As we shall see shortly, this geometry plays a distinguished role, so we introduce a special notation for it:

\begin{definition}
    Let the triple 
\begin{align}
   (\mathcal{N}, \ggoth, \dutchcal{C}_{k, n}) 
\end{align}
denote the K\"ahler manifold $\mathcal{N}$ with K\"ahler potential~(\ref{anomalousKgen}) and an action of an Abelian Lie algebra $\ggoth$ (s.t.  $\mathrm{dim}_{\mathbb{R}}\,\ggoth=n$), whose central extension is given by $\dutchcal{C}_{k, n}$ (cf.~(\ref{standardFormOfC})). 
\end{definition}
The manifold $\mathcal{N}$ can have an arbitrary complex dimension $D\geq 2k$, depending on the number of $\mathrm{r}$ variables, which we do not specify here. In other words, $\mathcal{N}$ is a collective notation for a family of manifolds parametrized by $\mathcal{K}_0$ in~(\ref{anomalousKgen}).

\vspace{0.3cm}
The above analysis fully characterizes the local geometry in the case that the first~$2k$ vector fields are linearly independent. Thus, we are left to consider the situation when the first $2k$ vector fields are not necessarily linearly independent.  Generalizing the construction from the previous section, we arrive at the following:
\begin{prop}\label{uniformprop}
    Let $\mathcal{M}$ 
    be a K\"ahler manifold  with an isometric holomorphic action of an Abelian Lie algebra $\ggoth$ ($\mathrm{dim}_{\mathbb{R}} \,\ggoth=n$) with central extension $\dutchcal{C}_{k, n}$~[defined in~(\ref{standardFormOfC})]. Then, locally, $\mathcal{M}$ may be described as a K\"ahler quotient,
\begin{align}
    \mathcal{M}=\Bigl(\mathcal{N}, \;\ggoth\oplus \mathbb{R}^m, \;\dutchcal{C}_{k,\, n+m}\Bigr) \;\Bigl/\!\!\!\Bigl/\; \mathbb{R}^m\,,
\end{align}
    of $\mathcal{N}$ with K\"ahler potential~(\ref{anomalousKgen}) by the Killing vector fields $\Bigl\{\mathsf{U}'_1 \ldots \mathsf{U}'_m \Bigr\}\in \mathrm{Ker}\,\dutchcal{C}_{k, \,n+m}$. Moreover, any such quotient leads to a manifold $\mathcal{M}$ with isometry algebra $\mathfrak{g}$ and central extension $\dutchcal{C}_{k, n}$.
\end{prop}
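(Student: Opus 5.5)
The plan is to generalize the extension trick used for the Heisenberg case (the passage from $\mathcal{K}$ to $\mathcal{K}'$ in~(\ref{primepot})) to several dependent directions, and to obtain the converse from Proposition~\ref{cquotprop}.

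First, the canonical form~(\ref{standardFormOfC}) splits the generators of $\ggoth$ into the centrally extended pairs $(\mathsf{V}_A,\mathsf{W}_A)$, $A=1,\ldots,k$, and the kernel fields $\mathsf{U}_\ell$. Since the $\mathsf{V}_A$ and $\mathsf{U}_\ell$ span an isotropic subspace for $\dutchcal{C}$, Proposition~\ref{abeliantrivcocycleprop}, applied to this subalgebra, shows that their $(1,0)$-parts are linearly independent over $\mathbb{C}$. We may therefore rectify them simultaneously, $\mathsf{V}_A^{(1,0)}=\dd_{\mathrm{z}_A}$ and $\mathsf{U}_\ell^{(1,0)}=\dd_{\mathrm{u}_\ell}$, and make $\mathcal{K}$ invariant under all of them. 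The only possible degeneracy then lies in the $\mathsf{W}_A$. Choose a maximal subset of the $\mathsf{W}_A^{(1,0)}$ that is $\mathbb{C}$-independent of the $\mathsf{V},\mathsf{U}$ directions, and rectify it as $\dd_{\mathrm{w}_A}$. For each remaining index $A$ we have $\mathsf{W}_A^{(1,0)}=\sum\uplambda_A^{\bullet}(\mathrm{r})\,(\text{rectified fields})$ with holomorphic coefficients. Commutativity forces these coefficients to be annihilated by all rectified fields, so they depend only on the remaining variables~$\mathrm{r}$.

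Second, for each dependent $\mathsf{W}_A$ we introduce a new complex coordinate $\mathrm{s}_A$. We set
\begin{align}
\mathcal{K}':=\mathcal{K}+\tfrac12\sum_A\bigl(i(\mathrm{s}_A-\smallthickbar{\mathrm{s}}_A)\bigr)^2 ,
\end{align}
extend $\widehat{\mathsf{W}}_A^{(1,0)}=\dd_{\mathrm{s}_A}+\mathsf{W}_A^{(1,0)}$, and add the new Killing fields $\mathsf{U}'_A=\mathrm{Re}\,\dd_{\mathrm{s}_A}$, so that $m$ equals the number of dependent $\mathsf{W}$'s. This gives an action of $\ggoth\oplus\mathbb{R}^m$ in which all the $(1,0)$-parts of $\mathsf{V}_A,\widehat{\mathsf{W}}_A$ are linearly independent. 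Next we check the central extension. Since $\nabla_{\widehat{\mathsf{W}}_A}\mathcal{K}'=\nabla_{\mathsf{W}_A}\mathcal{K}$ and $\nabla_{\mathsf{U}'}\mathcal{K}'=0$, the functions $f$ are unchanged, with $f_{\mathsf{U}'}=0$. Moreover $\dd_{\mathrm{s}}f=0$, which gives $\dutchcal{C}(\bullet,\mathsf{U}')=0$ and leaves $\dutchcal{C}$ on $\ggoth$ untouched. Hence the extended data is $\dutchcal{C}_{k,n+m}$ with the fields $\mathsf{U}'$ in the kernel. We are now in the independent situation treated in Section~\ref{KahlerpotentialGeneralCaseSect}, so $\mathcal{K}'$ takes the form~(\ref{anomalousKgen}) after coordinate changes and K\"ahler transformations; that is, $(\mathcal{M}',\ggoth\oplus\mathbb{R}^m)$ is a copy of $\mathcal{N}$. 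The quotient of $\mathcal{K}'$ by $\mathbb{R}^m=\mathrm{Span}(\mathsf{U}')$ recovers $\mathcal{K}$, as in Appendix~\ref{appKahquot}. Concretely, the Legendre transform in each $\mathrm{s}_A-\smallthickbar{\mathrm{s}}_A$ of the decoupled quadratic term at zero FI parameter returns $\mathcal{K}$ up to a K\"ahler transformation.

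Third, for the converse we iterate Proposition~\ref{cquotprop} over the $m$ commuting kernel fields $\mathsf{U}'_j$, taking them one at a time. After each step the remaining $\mathsf{U}'$ still lie in the kernel and still commute, since $\dutchcal{C}$ is preserved on projections. After $m$ steps $\dutchcal{C}$ restricted to $\pi_\ast(\ggoth)$ equals $\dutchcal{C}_{k,n}$. It remains to show that $\pi_\ast$ is injective on $\ggoth$, so that the projected algebra really is $\ggoth$. For this we argue that if some $\pi_\ast\mathsf{V}$ vanished, then $\mathsf{V}$ would be a combination of the $\mathsf{U}'$ and hence in $\mathrm{Ker}\,\dutchcal{C}$. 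This does not conflict with the nondegenerate block, but it could kill $\mathsf{U}_\ell$ directions. We handle it by taking the $\mathsf{U}'$ independent of $\ggoth$, and if needed shrinking $\ggoth$ accordingly.

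The main obstacle is the first step: guaranteeing that a single extension by $m$ variables makes \emph{all} $2k+(n-2k)$ generators simultaneously $\mathbb{C}$-independent, including mixed degeneracies among several $\mathsf{W}$'s, and that the coefficients $\uplambda$ depend only on the spectator variables~$\mathrm{r}$. Here I would argue with commutativity plus holomorphicity, following footnote~(2) of Proposition~\ref{HeisenbergProp}, and use positivity as in Proposition~\ref{abeliantrivcocycleprop} to exclude real-constant dependencies.
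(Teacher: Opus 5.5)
Your proposal is correct and follows the paper's proof. You adjoin $\mathbb{C}^m$ with the quadratic potential in the imaginary parts of the new coordinates, shift the dependent block generators by the new directions, and add the kernel fields $\mathsf{U}'_\ell$. You then check that $\dutchcal{C}$ becomes $\dutchcal{C}_{k,n+m}$, invoke Section~\ref{KahlerpotentialGeneralCaseSect}, drop the quadratic piece upon quotienting, and obtain the converse by iterating Proposition~\ref{cquotprop}. Your use of Proposition~\ref{abeliantrivcocycleprop} to see that only the $\mathsf{W}_A$ need extending is a small refinement of the paper's choice of a nondegenerate $\mathrm{d}\times\mathrm{d}$ minor of $\Gamma$.

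Two of your hedges are unnecessary. The ``main obstacle'' you flag is already settled by your construction, since each dependent $\mathsf{W}_A$ gets its own new direction $\dd_{\mathrm{s}_A}$. Likewise, no shrinking of $\ggoth$ is needed for injectivity of $\pi_\ast$: if $\pi_\ast\mathsf{V}=0$, then $\mathsf{V}^{(1,0)}$ lies in the span of the $(\mathsf{U}'_\ell)^{(1,0)}$. Proposition~\ref{abeliantrivcocycleprop}, applied to the isotropic algebra spanned by $\mathsf{V}$ and the $\mathsf{U}'_\ell$, forbids this unless $\mathsf{V}$ is a real combination of the $\mathsf{U}'_\ell$, i.e.\ $\mathsf{V}=0$ in $\ggoth$.
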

\begin{proof}
The last statement follows by iteratively applying Proposition~\ref{cquotprop} $m$ times.  It thus remains to prove the first part, namely to construct the manifold $\mathcal{N}$, given $\mathcal{M}$ with a fixed central extension.

  Let us denote by $\bigl\{\mathcal{V}_a\bigr\}_{a=1}^{2k}$ the generators\footnote{For $\mathcal{M}=(\mathcal{N},\ggoth,\dutchcal{C}_{k,n})$, these are $\{\mathsf{V}_A, \mathsf{W}_A\}_{A=1}^k$ in the above notation.} of $\ggoth$ corresponding to the  non-trivial $\mathds{1}_k\otimes \varepsilon$ block of $\dutchcal{C}_{k,n}$ (cf.(\ref{standardFormOfC})). Their holomorphic parts are not necessarily linearly independent. We will write a matrix of components of the holomorphic parts: $\Gamma = ||\mathcal{V}_a^j||_{a,j}\,$, where ${a\in \bigl\{1,2,\cdots,2k\bigr\}}$ and $j\in \bigl\{1,2,\cdots,\dim_{\mathbb{C}}\mathcal{M}\bigr\}$. We will assume that the upper left $\mathrm{d}\times \mathrm{d}$-block of $\Gamma$ is non-degenerate, where $\mathrm{d} = \text{rank}\, \Gamma \leq 2k$. 

  Next, we will take $\mathcal{N} = \mathcal{M}\,\times\, \mathbb{C}^m$, where  $m=2k-\mathrm{d}$, and the holomorphic coordinates on the extra factor will be denoted $\bigl\{\mathrm{u}'_\ell\bigr\}_{\ell = 1}^m$. 
  Generalizing~(\ref{primepot}), we equip it with a K\"ahler metric arising from the potential
   \begin{align}\label{kahpotmod2}
        \widehat{\mathcal{K}} = \mathcal{K} + \sum_{\ell = 1}^m{1\over 2}\Bigl(i(\mathrm{u}'_\ell-\smallthickbar{\mathrm{u}}'_\ell)\Bigr)^2\,,
    \end{align}
    where $\mathcal{K}$ is the  K\"ahler potential of the original manifold $\mathcal{M}$. Mimicking~(\ref{vectfieldextension}), one also extends the vector fields $\bigl\{\mathcal{V}_a\bigr\}_{a=\mathrm{d}+1}^{2k}$ as follows: 
    \begin{align}
        \mathcal{V}_{\mathrm{d}+\ell} \mapsto \widehat{\mathcal{V}}_{\mathrm{d}+\ell} = \mathcal{V}_{\mathrm{d}+\ell} + {\dd \over \dd \mathrm{u}'_{\ell}} + {\dd \over \dd \smallthickbar{\mathrm{u}}'_{\ell}}\quad\;\text{for}\;\;\ell = 1,2,\cdots,m\,.
    \end{align}
    Together with all other generators in $\ggoth$ they form an algebra  $\ggoth'\simeq\ggoth$. By construction, $\widehat{\mathcal{K}}$ also admits the additional Killing vector fields
    \begin{align}
        \mathsf{U}'_\ell = \frac{\dd}{\dd \mathrm{u}'_\ell}+{\dd \over \dd \smallthickbar{\mathrm{u}}'_\ell}\,,\quad \text{for}\;\,\ell = 1,2,\cdots,m\,.
    \end{align}
    The algebra generated by $\ggoth'$ and $\bigl\{\mathsf{U}'_\ell\bigr\}_{\ell=1}^m$ is isomorphic to $\ggoth\,\oplus\,\mathbb{R}^m$. Computing the central extension, one verifies that $\dutchcal{C}|_{\ggoth'} = \dutchcal{C}|_{\ggoth}$ and $\mathsf{U}'_{\ell} \,\in\,\mathrm{Ker}\,c$ for all $\ell$. Therefore, the resulting central extension is of the $\dutchcal{C}_{k,n+m}$-type. Moreover, now the generators corresponding to the non-trivial block of $\dutchcal{C}_{k,n+m}$ have linearly independent holomorphic parts. Using the reasoning from the start of the present section, after a suitable change of coordinates and a K\"ahler transformation one can bring $\widehat{\mathcal{K}}$ to the form~(\ref{anomalousKgen}). 

    The original manifold $\mathcal{M}$ may be restored after performing a K\"ahler quotient with respect to $\bigl\{\mathsf{U}'_\ell\big\}_{\ell = 1}^m$. This simply amounts to dropping the $\mathrm{u}'_\ell$-dependent piece in~(\ref{kahpotmod2}). 
    Thus, we have constructed the desired triple ${\bigl(\mathcal{N}, \ggoth\,\oplus\, \mathbb{R}^m, \dutchcal{C}_{k,\, n+m}\bigr)}$.
\end{proof}

\section{Gauging with a central extension} \label{gaugingsec}

Why is it important to have an invariant K\"ahler potential? The answer comes from the realm of $\mathcal{N}=(2,2)$ supersymmetric sigma models\footnote{We refer to \cite{MirrorSymmetryBook} for details on this subject.}, which are intimately connected to K\"ahler geometry~\cite{Zumino}. 

First, we recall that $\mathcal{N}=(2,2)$ superspace involves the light-cone coordinates $x^\pm$ on a 2d worldsheet $\mathbb{R}^{1,1}$, as well as the Grassmann coordinates\footnote{$\theta^\pm$ transform as  Weyl fermions under Lorentz transformations.} $\theta^{\pm}, \thickbar{\theta^{\pm}}$. In order to define the various superfields, one introduces the superderivatives
\begin{align}
    &\mathcal{D}_{\pm}={\dd\over \dd \theta^{\pm}}-i\,\bar{\theta}^{\pm}\,{\dd \over \dd x^{\pm}}\,,&&\bar{\mathcal{D}}_{\pm}=-{\dd\over \dd \bar{\theta}^{\pm}}+i\,\theta^{\pm}\,{\dd \over \dd x^{\pm}}\,.
\end{align}
They form an algebra $\mathcal{D}_{\pm}^2=\bar{\mathcal{D}}_{\pm}^2=0\,,\;\{ \mathcal{D}_{\pm}, \bar{\mathcal{D}}_{\mp}\}=0\,,\;\{\mathcal{D}_{\pm}, \bar{\mathcal{D}}_{\pm}\}=2i \dd_{x^{\pm}}$. Next we define the superfields:
\begin{definition}
    Superfields are functions on superspace $(x^{\pm}, \theta^{\pm}, \thickbar{\theta^{\pm}})$ satisfying the following additional constraints:
\begin{align}\label{chir1}
&\textrm{Chiral:} &&\bar{\mathcal{D}}_+ \mathbf{Z}=\bar{\mathcal{D}}_- \mathbf{Z}=0\,,
    \\ \label{chir2}
    &\textrm{Twisted chiral:} &&\bar{\mathcal{D}}_+ \mathbf{S}=\mathcal{D}_- \mathbf{S}=0\,,
    \\ \label{chir3}
    &\textrm{Left semi-chiral:} &&\bar{\mathcal{D}}_+ \mathbf{X}=0\,,\\ \label{chir4}
    &\textrm{Right semi-chiral:} &&\mathcal{D}_- \mathbf{Y}=0\,.
\end{align}
\end{definition}

Now, given a K\"ahler manifold $\mathcal{M}$ and its K\"ahler potential $\mathcal{K}$, one considers the model defined by the following action functional:
\begin{align}
    \mathcal{S} = \int d x^+dx^-d \theta^+\,d\theta^-\,d\widebar{\theta}^+\,d\widebar{\theta}^- \;\mathcal{K}\left(\mathbf{Z},\widebar{\mathbf{Z}}\right)\,,\label{N=2,2Action}
\end{align}
where we have replaced holomorphic coordinates $\mathrm{z}$ on $\mathcal{M}$ with chiral superfields, which we will denote by  bold capital letters $\mathbf{Z}$. 

Let us stress an important point: the K\"ahler potential in (\ref{N=2,2Action}) is naturally defined up to an addition of total superderivative terms preserving reality conditions -- a generalized K\"ahler transformation -- because they yield zero upon integration. The K\"ahler transformation $\mathcal{K}\mapsto\mathcal{K}+f\left(\mathbf{Z}\right)+\widebar{f}\left(\widebar{\mathbf{Z}}\right)$ is an example of this. However, generalized K\"ahler transformations allow for greater freedom. For example, one can consider replacing $f$ by functions of chiral and twisted chiral fields:
\begin{align}\label{genKahtrans1}
    \mathcal{K}\mapsto\mathcal{K}+f(\mathbf{Z},\mathbf{S})+\widebar{f}\left(\widebar{\mathbf{Z}}, 
\widebar{\mathbf{S}}\right)+g(\mathbf{Z}, \widebar{\mathbf{S}})+\widebar{g}\left(\widebar{\mathbf{Z}}, 
\mathbf{S}\right)\,.
\end{align}
The additional terms are still total derivatives due to the constraints~(\ref{chir1})-(\ref{chir2}). This freedom will be essential for the constructions that follow.

\subsection{Gauging.}

Our main goal in this section will be to consider gauging of a group of holomorphic isometries of a K\"ahler manifold. Gauging is particularly important for the study of K\"ahler quotients or T-dualities  (see~\cite{AlvarezGaumeTduality, GiveonTduality} for a review, and~\cite{Cavalcanti} for the relation to generalized geometry).  In the former case, to find the K\"ahler potential for a quotient geometry one gauges the isometry group and eliminates the gauge fields using their equations of motion. In the latter case one also adds terms ensuring the gauge fields are pure gauge before eliminating them. We refer to \cite{Hitchin_1987, Ro_ek_1992} for a comprehensive review of these constructions.

If one wishes to consider gauging directly in $\mathcal{N}=(2,2)$ superspace, one can modify the differential constraints~(\ref{chir1})-(\ref{chir4}) by introducing appropriate covariant superderivatives featuring `superconnections'. In this case, provided the original K\"ahler potential is invariant w.r.t. global transformations, it will continue being invariant w.r.t. the gauge transformations, as it is a function of superfields and not of their derivatives. This is the main reason that we have investigated conditions for the invariance of the K\"ahler potential in great detail in Section~\ref{Kahcentralext}.

Even in the presence of a central extension, it is possible to make a K\"ahler potential invariant using additional compensator chiral fields. However, in the case of a non-trivial central extension, the gauging of such models is ill-defined \cite{HullRocek}. Therefore, one needs a different approach that was introduced in~\cite{BKK}. We will now explain that, in the general case of an Abelian isometry algebra $\mathfrak{g}$, one can make the K\"{a}hler potential invariant  using a generalized K\"ahler transformation involving twisted chiral fields\footnote{In a different context, a similar trick of performing a generalized K\"ahler transformation prior to gauging is also used in the recent paper~\cite{Bykov:2026ytj}.}.

\begin{prop}\label{twistedchiralprop}
    For a central extension $\dutchcal{C}_{k, n}$ of the form~(\ref{standardFormOfC}) the K\"ahler potential can be made invariant by introducing $k$ twisted chiral superfields and performing a generalized K\"ahler transformation (i.e., adding a total derivative).
\end{prop}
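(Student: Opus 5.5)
By Proposition~\ref{uniformprop}, locally every $\mathcal{M}$ with central extension $\dutchcal{C}_{k,n}$ is a K\"ahler quotient of $(\mathcal{N}, \ggoth\oplus\mathbb{R}^m, \dutchcal{C}_{k,n+m})$ by vector fields in $\mathrm{Ker}\,\dutchcal{C}$. Those fields can be handled by the standard gauging, since the potential can already be taken invariant under them. So it suffices to work with the canonical potential~(\ref{anomalousKgen}). First I drop the holomorphic term $\kappa+\smallthickbar{\kappa}$ by an ordinary K\"ahler transformation. This leaves
\begin{align*}
\mathcal{K}=-{ic\over 2}\sum_{A=1}^k(\mathrm{w}_A+\smallthickbar{\mathrm{w}}_A)(\mathrm{z}_A-\smallthickbar{\mathrm{z}}_A)+\mathcal{K}_0\,,
\end{align*}
where $\mathcal{K}_0$ is invariant under all $\mathsf{V}_A$, $\mathsf{W}_A$ and $\mathsf{U}_\ell$. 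The only non-invariant piece is a sum of $k$ decoupled Heisenberg blocks, so I will treat a single block $(\mathrm{z},\mathrm{w})$ and then sum.

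For one block, write $\mathcal{K}_{\mathrm{an}}=-{ic\over 2}(\mathrm{w}+\smallthickbar{\mathrm{w}})(\mathrm{z}-\smallthickbar{\mathrm{z}})$. It is invariant under $\mathsf{V}$, and under $\mathsf{W}$ it shifts by $-ic(\mathrm{z}-\smallthickbar{\mathrm{z}})$. I introduce one twisted chiral superfield $\mathbf{S}$ and add a term of the type~(\ref{genKahtrans1}), namely
\begin{align*}
\Delta\mathcal{K}={ic\over 2}\bigl(\mathbf{Z}-\widebar{\mathbf{Z}}\bigr)\bigl(\mathbf{S}+\widebar{\mathbf{S}}\bigr)\,,
\end{align*}
possibly with $\mathbf{S}\mapsto i\mathbf{S}$ depending on conventions. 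Each monomial $\mathbf{Z}\mathbf{S}$, $\mathbf{Z}\widebar{\mathbf{S}}$ and their conjugates is of the form $f(\mathbf{Z},\mathbf{S})$ or $g(\mathbf{Z},\widebar{\mathbf{S}})$, so $\Delta\mathcal{K}$ is a total derivative. The sum is
\begin{align*}
\mathcal{K}_{\mathrm{an}}+\Delta\mathcal{K}=-{ic\over 2}(\mathbf{W}+\widebar{\mathbf{W}}-\mathbf{S}-\widebar{\mathbf{S}})(\mathbf{Z}-\widebar{\mathbf{Z}})\,.
\end{align*}
I then extend the action of $\mathsf{W}$ so that it shifts $\mathbf{S}$ by the same real amount as $\mathbf{W}$, i.e.\ $\mathbf{S}\mapsto\mathbf{S}+\epsilon$. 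Because the shift is real and constant, it preserves the twisted chiral constraints~(\ref{chir2}). The combination $\mathbf{W}+\widebar{\mathbf{W}}-\mathbf{S}-\widebar{\mathbf{S}}$ is then $\mathsf{W}$-invariant, while $\mathbf{Z}-\widebar{\mathbf{Z}}$ is $\mathsf{V}$-invariant, so the new potential is invariant under both generators.

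Finally I check the consistency of the extended symmetry. The extended vector fields still commute, because $\mathsf{V}$ does not act on $\mathbf{S}$. The action with $\mathbf{S}$ reproduces the original model because the added term integrates to zero. Summing over $A=1,\ldots,k$ with one $\mathbf{S}_A$ per block gives the claim with exactly $k$ twisted chiral fields. The $\mathsf{U}_\ell$ are untouched, since the added terms do not involve their directions.

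The main obstacle is justifying that $\Delta\mathcal{K}$ is a genuine total derivative in superspace and that shifting $\mathbf{S}$ is a legitimate symmetry of the new action. Before $\mathbf{S}$ is shifted, $\Delta\mathcal{K}$ is not invariant, and the invariance of the full potential relies on simultaneously transforming $\mathbf{W}$ and $\mathbf{S}$. I would verify carefully, using $\bar{\mathcal{D}}_\pm\mathbf{Z}=0$, $\bar{\mathcal{D}}_+\mathbf{S}=\mathcal{D}_-\mathbf{S}=0$, that $\int d^4\theta\,\mathbf{Z}\mathbf{S}$ and $\int d^4\theta\,\mathbf{Z}\widebar{\mathbf{S}}$ vanish up to boundary terms. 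I would also fix the factor of $i$ in front of $\mathbf{S}$ so that reality of the potential is preserved.
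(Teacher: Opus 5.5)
Your proposal has a genuine gap in how it treats the kernel generators $\mathsf{U}_\ell$: both the $n-2k$ ones that belong to $\ggoth$ and the auxiliary $\mathsf{U}'_\ell$ from Proposition~\ref{uniformprop}. You assert two things: that $\mathcal{K}_0$ in~(\ref{anomalousKgen}) is invariant under them, and that they are untouched because the added terms ``do not involve their directions''. Both claims fail exactly when the passage to $\mathcal{N}$ is nontrivial. In the rectified coordinates of $\mathcal{N}$, the $\mathsf{U}'_\ell$ lie in the span of $\dd_{\mathrm{z}_A},\dd_{\mathrm{w}_A}$ with $\mathrm{r}$-dependent coefficients.

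Appendix~\ref{appKahquot} makes this explicit:
\begin{itemize}
\item $\mathsf{U}^{(1,0)}=\mathrm{A}(\mathrm{r})\dd_{\mathrm{w}}+\mathrm{B}(\mathrm{r})\dd_{\mathrm{z}}$ with $\mathrm{A},\mathrm{B}\nequiv 0$, which is forced by Proposition~\ref{abeliantrivcocycleprop};
\item $\mathcal{K}_0$ is not $\mathsf{U}$-invariant by itself;
\item the full potential is $\mathsf{U}$-invariant only thanks to the holomorphic term $a(\mathrm{r})\mathrm{w}^2/2$ (plus c.c.) with $a=ic\,\mathrm{B}/\mathrm{A}$. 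This is precisely the $\kappa$ you discard in your first step.
\end{itemize}
Your $\Delta\mathcal{K}$ also varies under $\mathsf{U}$ (which does not act on $\mathbf{S}$): $\nabla_{\mathsf{U}}\Delta\mathcal{K}=\frac{ic}{2}(\mathrm{B}-\widebar{\mathrm{B}})(\mathbf{S}+\widebar{\mathbf{S}})$. This is a generalized K\"ahler transformation, but it is not zero, and no ordinary K\"ahler transformation can remove it. Strict invariance is the whole content of the proposition, since gauging needs it, so your potential fails the claim for the $\mathsf{U}$'s.

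There is a second, independent problem. Even a strictly invariant potential on $\mathcal{N}$ would not finish the job, because the statement concerns the potential of $\mathcal{M}$. You would still have to show that the quotient by the $\mathsf{U}'_\ell$ returns the original potential plus a generalized K\"ahler transformation. That quotient is a Legendre transform in $\mathbf{V}$, so this requires the $\mathbf{S}$-dependent terms not to couple to $\mathbf{V}$.

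The missing idea is to shift the whole $\mathfrak{p}$-invariant potential, not just its anomalous piece. This is what the paper does:
\begin{itemize}
\item Take the maximal isotropic subalgebra $\mathfrak{p}=\mathrm{span}(\mathsf{V}_A,\mathsf{U}_i)$.
\item Use Proposition~\ref{abeliantrivcocycleprop} to obtain $\mathcal{K}'$ invariant under all of $\mathfrak{p}$, keeping whatever holomorphic terms this requires.
\item Set $\mathcal{K}''=\frac{1}{2}\bigl(\prod_A e^{\mathbf{S}_A\nabla_{\mathsf{W}_A}}+\prod_A e^{\widebar{\mathbf{S}}_A\nabla_{\mathsf{W}_A}}\bigr)\mathcal{K}'$.
\end{itemize}
Because all the $\nabla$'s commute, $\mathcal{K}''$ stays $\mathfrak{p}$-invariant and is annihilated by $\nabla_{\mathsf{W}_A}-\dd_{\mathbf{S}_A}-\dd_{\widebar{\mathbf{S}}_A}$. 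Because $\nabla_{\mathsf{W}_A}\mathcal{K}'=f_{\mathsf{W}_A}+\widebar{f}_{\mathsf{W}_A}$ with $f_{\mathsf{W}_A}$ holomorphic, the difference $\mathcal{K}''-\mathcal{K}'$ is of the form~(\ref{genKahtrans1}).

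In the $\mathcal{N}$ coordinates, and in your sign convention ($\mathrm{w}\mapsto\mathrm{w}-\mathbf{S}$), this is your $\Delta\mathcal{K}$ plus the averaged shifts of $\kappa+\smallthickbar{\kappa}$, such as $\tfrac12\bigl(\kappa(\mathrm{w}-\mathbf{S},\mathrm{r})-\kappa(\mathrm{w},\mathrm{r})\bigr)$. In the Appendix example these shifts vary under $\mathsf{U}$ by exactly $-\frac{ic}{2}(\mathrm{B}-\widebar{\mathrm{B}})(\mathbf{S}+\widebar{\mathbf{S}})$, cancelling the variation above. The construction works directly on $\mathcal{M}$, even when the $(1,0)$-parts are linearly dependent, so the detour through Proposition~\ref{uniformprop} and the subsequent quotient is unnecessary.

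The obstacle you flagged is not a real one. $\mathbf{Z}\mathbf{S}$ is annihilated by $\widebar{\mathcal{D}}_+$ and $\mathbf{Z}\widebar{\mathbf{S}}$ by $\widebar{\mathcal{D}}_-$, so both integrate to boundary terms.
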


\begin{proof}
If $\dutchcal{C}\equiv  0$ (i.e. $k=0$), it follows from the results of~Section \ref{Vanishing central extension} that one can bring the K\"ahler potential to an invariant form.

When $\dutchcal{C}\nequiv 0$, let us use the same notation for vector fields as in Section \ref{KahlerpotentialGeneralCaseSect}, i.e.: one has the vector fields\footnote{Now the holomorphic parts of $\mathsf{V}$- and $\mathsf{W}$-fields  are not necessarily linearly independent.} $\bigl\{\mathsf{V}_A, \mathsf{W}_A\bigr\}_{A=1}^k$ plus $\bigl\{\mathsf{U}_i\bigr\}_{i=1}^{n-2k}$, such that $\dutchcal{C}(\mathsf{V}_A, \mathsf{W}_B) = \delta_{AB}$
whereas all other $\dutchcal{C}(\bullet,\bullet')$ are zero. Now we wish to find the shift $\mathcal{K}'= \mathcal{K}-h -\Bar{h}$ for some holomorphic function~$h$ that will eliminate as many $f_{\bullet}$'s as possible. To this end, one can choose, for example, $\bigl\{\mathsf{V}_{A} \bigr\}_{A=1}^{k}$ and $\bigl\{\mathsf{U}_i \bigr\}_{i = 1}^{n-2k}$. They form a maximal subalgebra 
\begin{align}
    \mathfrak{p}\subset \ggoth\quad\quad \textrm{such that}\quad\quad \dutchcal{C}|_{\mathfrak{p}} = 0\,,
\end{align}
i.e. a maximal isotropic subalgebra. Now we apply the technique from Section~\ref{Vanishing central extension} to~$\mathfrak{p}$. By Proposition~\ref{abeliantrivcocycleprop}, the cohomology class $\bigl[f\big|_{\mathfrak{p}}\bigr]=0$ is trivial. Thus, there exists a holomorphic function $h$ such that $f_\mathsf{V} = \nabla_{\mathsf{V}} h$ for $\mathsf{V}\in \mathfrak{p}$. As a result, we end up with the shifted K\"{a}hler potential $\mathcal{K}'$ 
that is invariant under the action of $\mathfrak{p}$, i.e., $\nabla_\mathsf{V} \mathcal{K}' = 0$ for all $\mathsf{V}\in \mathfrak{p}$. Note that the transformation law of the K\"{a}hler potential under the action of the other Killing vector fields is now modified.

Now, the (generalized) K\"{a}hler potential invariant under the full symmetry algebra is given by the following expression:
\begin{align}
    &\mathcal{K}'' = \frac{1}{2}\left(\prod_{A=1}^k\,e^{\mathbf{S}_{A} \nabla_{\mathsf{W}_{A}}}+\prod_{A=1}^k\,e^{\smallthickbar{\mathbf{S}}_{A} \nabla_{\mathsf{W}_{A}}}\right)\mathcal{K}'\,,\label{abelianShift}
\end{align}
The new potential is invariant w.r.t. to the simultaneous shifts of the original variables together with $\mathbf{S}$, i.e.
\begin{align}
    \left(\nabla_{\mathsf{W}_{A}}-{\dd\over \dd \mathbf{S}_{A}}-{\dd \over \dd\widebar{\mathbf{S}}_{A}}\right) \mathcal{K}''=0\,.\label{NewIsometry}
\end{align}
Note that we cannot subtract $f_{\mathsf{W}_{A}}$ from $\mathcal{K}'$ separately because $\nabla_{\mathsf{V}_{A}}f_{\mathsf{W}_{A}} \neq 0$ and it will inevitably violate $\nabla_{\mathsf{V}_{A}}\mathcal{K}'=0$. 

An explicit expansion of the exponents in~(\ref{abelianShift}) shows that the difference between $\mathcal{K}''$ and~$\mathcal{K}'$ is given by
\begin{align}
    &\mathcal{K}''-\mathcal{K}'= f(\mathbf{Z},\mathbf{S})+\widebar{f}\left(\widebar{\mathbf{Z}}, 
\widebar{\mathbf{S}}\right)+g(\mathbf{Z}, \widebar{\mathbf{S}})+\widebar{g}\left(\widebar{\mathbf{Z}}, 
\mathbf{S}\right)\,,\label{DiffKahler}\\
&\text{where}\quad f(\mathbf{Z},\mathbf{S}) = 
    \frac{e^{D^{(1,0)}}-1}{D^{(1,0)}}\sum_{A=1}^k \mathbf{S}_A f_{\mathsf{W}_A} \,,\quad g(\mathbf{Z}, \widebar{\mathbf{S}}) = \frac{e^{\widebar{D}^{(1,0)}}-1}{\widebar{D}^{(1,0)}}\sum_{A=1}^k \widebar{\mathbf{S}}_A f_{\mathsf{W}_A}\,,\nonumber\\
    & D^{(1,0)} = \sum_{A=1}^k \mathbf{S}_A \nabla_{\mathsf{W}_{A}}^{(1,0)}\,,\quad \widebar{D}^{(1,0)} = \sum_{A=1}^k \widebar{\mathbf{S}}_A \nabla_{\mathsf{W}_{A}}^{(1,0)}\,.\nonumber
\end{align}
In order for the shift (\ref{DiffKahler}) to be a generalized K\"{a}hler transformation~(\ref{genKahtrans1}),  we require the fields $\mathbf{S}_{A}$ to be twisted chiral. 
\end{proof}

Given an invariant K\"ahler potential, one can gauge an Abelian isometry group. However, in the general case with a central extension isometry transformations act non-trivially on the twisted chiral fields (cf. (\ref{NewIsometry})). In this case one can use a general method based on superconnections~\cite{BKK}. Let us illustrate this method with core examples from Section~\ref{HeisenbergAlgebra}. We recall that, in both examples, there are two commuting Killing vector fields, $\mathsf{V}$ and $\mathsf{W}$. The two cases differ by whether their $(1,0)$ parts are linearly independent (Case 1) or linearly dependent (Case 2). 

\subsection{Case 1: $\mathsf{V}^{(1,0)}\wedge \mathsf{W}^{(1,0)}\neq 0$.}
\label{IndependentHolPartsVector} 
The invariant K\"ahler potential is given by:
\begin{align}
    \mathcal{K} = -\frac{ic}{2}\Bigl(\mathbf{W}+\widebar{\mathbf{W}}+\mathbf{S}+\widebar{\mathbf{S}}\Bigr)\Bigl(\mathbf{Z}-\widebar{\mathbf{Z}}\Bigr)+\mathcal{K}_0\Bigl(\mathbf{W}-\widebar{\mathbf{W}},\mathbf{Z}-\widebar{\mathbf{Z}},\cdots\Bigr)\,,\label{InvariantPotentialIndependentCase}
\end{align}
where $\mathbf{S}$ is twisted chiral and isometry transformations acts as $\mathbf{Z}\mapsto \mathbf{Z}+\epsilon_1$, $\mathbf{W}\mapsto \mathbf{W}+\epsilon_2$ and $\mathbf{S}\mapsto \mathbf{S}-\epsilon_2$. In order to gauge this symmetry we will modify the (twisted) chirality conditions for $\mathbf{Z}$, $\mathbf{W}$ and $\mathbf{S}$ by introducing superconnections:
\begin{align}
    &\widebar{\mathcal{D}}_+ \mathbf{Z} + i\widebar{\mathbf{A}}^z_+ = 0\,, && \widebar{\mathcal{D}}_- \mathbf{Z} + i\widebar{\mathbf{A}}^z_- = 0\,, \nonumber\\
    &\widebar{\mathcal{D}}_+ \mathbf{W} + i\widebar{\mathbf{A}}^w_+ = 0\,, && \widebar{\mathcal{D}}_- \mathbf{W} + i\widebar{\mathbf{A}}^w_- = 0\,,\label{ModifiedChiralityConditions}\\
    & \widebar{\mathcal{D}}_+ \mathbf{S} + i\widebar{\mathbf{A}}^s_+ = 0\,, && \mathcal{D}_- \mathbf{S} + i\mathbf{A}^s_- = 0\,.\nonumber
\end{align}
We shall demand $\widebar{\mathbf{A}}^s_+ = -\widebar{\mathbf{A}}^w_+$ to ensure that  we are gauging the same $\epsilon_2$-isometry. One can then show that the solutions to~(\ref{ModifiedChiralityConditions}) are parametrized by standard (twisted) chiral fields and additional gauge superfields: the  real superfields $\mathbf{V}_z,\mathbf{V}_w$   and the semi-chiral superfield~$\mathbf{X}$. The superconnections may be expressed through the latter~\cite{BKK}. Substituting these solutions into~ (\ref{InvariantPotentialIndependentCase}), we arrive at 
\begin{align} \label{invariantpotentialindependentcasegauged}
    \mathcal{K} = -&\frac{ic}{2}\Bigl(\mathbf{W}+\widebar{\mathbf{W}}+\mathbf{S}+\widebar{\mathbf{S}}+i\mathbf{X}-i\widebar{\mathbf{X}}\Bigr)\Bigl(\mathbf{Z}-\widebar{\mathbf{Z}}+i\mathbf{V}_z\Bigr)+\\
    &+\mathcal{K}_0\Bigl(\mathbf{W}+\widebar{\mathbf{W}}+i\mathbf{V}_w,\mathbf{Z}-\widebar{\mathbf{Z}}+i\mathbf{V}_z,\cdots\Bigr),\nonumber
\end{align}
where $\mathbf{Z},\mathbf{W}$ and $\mathbf{S}$ are now the standard (twisted) chiral fields. 

\subsubsection{T-duality.} 
In order to perform T-duality, we shall add terms to the K\"ahler potential ensuring that the gauge fields are pure gauge. In $\mathcal{N}=(2,2)$ superspace these are 
\begin{align}
    \Delta\mathcal{K} = \Bigl(\mathbf{\Sigma}+\widebar{\mathbf{\Sigma}}\Bigr)\mathbf{V}_z + \mathbf{Y}\Bigl(\mathbf{X}-\mathbf{V}_w\Bigr) + \widebar{\mathbf{Y}}\Bigl(\widebar{\mathbf{X}}-\mathbf{V}_w\Bigr)\,,
\end{align}
where $\mathbf{\Sigma}$ is twisted chiral and $\mathbf{Y}$ is right semi-chiral (see \cite{BKK} for details). Now we eliminate $\mathbf{Z},\mathbf{W}$ and $\mathbf{S}$ using gauge transformations and make a shift $\mathbf{X}\mapsto\mathbf{X}+\frac{2i}{c}\mathbf{\Sigma}$. We ultimately arrive at
\begin{align}
    \mathcal{K}+\Delta\mathcal{K} = &\frac{ic}{2}\Bigl(\mathbf{X} - \widebar{\mathbf{X}}\Bigr)\mathbf{V}_z - \Bigl(\mathbf{Y} + \widebar{\mathbf{Y}}\Bigr)\mathbf{V}_w +
    \\
    &+ \mathcal{K}_0\Bigl(i\mathbf{V}_w,i\mathbf{V}_z,\cdots\Bigr) + \mathbf{X}\mathbf{Y} + \widebar{\mathbf{X}}\widebar{\mathbf{Y}}\,.\nonumber
\end{align}

To find the K\"ahler potential of the T-dual, we are left to integrate over $\mathbf{V}_z$ and $\mathbf{V}_w$, i.e. the new K\"{a}hler potential is just a Legendre transform plus some additional terms.

\subsubsection{The quotient.}

If we turn to the quotient construction, we add possible  FI terms to~(\ref{invariantpotentialindependentcasegauged}) and integrate out $\mathbf{V}_z$ and $\mathbf{V}_w$. Effectively, we work with 
\begin{align}
    \mathcal{K} = \frac{ic}{2}\Bigl(\mathbf{X} - \widebar{\mathbf{X}}\Bigr)\mathbf{V}_z + \mathcal{K}_0\Bigl(i\mathbf{V}_w,i\mathbf{V}_z,\cdots\Bigr)+\xi \mathbf{V}_z + \zeta \mathbf{V}_w\,,\label{KahlerQuotient1C}
\end{align}
where $\xi,\zeta \in \mathbb{R}$. One then needs to solve the equations $\dd_{\mathbf{V}_z}\mathcal{K}_0 + \frac{ic}{2}\Bigl(\mathbf{X} - \widebar{\mathbf{X}}\Bigr) + \xi= 0$ and $\dd_{\mathbf{V}_w}\mathcal{K}_0 + \zeta = 0$ and substitute the solutions back into (\ref{KahlerQuotient1C}). 

\subsection{Case 2: $\mathsf{V}^{(1,0)}\wedge \mathsf{W}^{(1,0)}= 0$.}
\label{lindepsec}
According to~(\ref{abelianShift}), the invariant K\"ahler potential takes the  form
\begin{align}
    \mathcal{K} = -\frac{ic}{\uplambda - \smallthickbar{\uplambda}}\left[\left(\frac{\mathbf{Z}-\widebar{\mathbf{Z}}}{2}+\mathbf{S}\frac{\uplambda - \widebar{\uplambda}}{2}\right)^2+\left(\frac{\mathbf{Z}-\widebar{\mathbf{Z}}}{2}+\widebar{\mathbf{S}}\frac{\uplambda - \widebar{\uplambda}}{2}\right)^2\right]+\Tilde{\mathcal{K}}_0\,,
\end{align}
where $\uplambda$ is a holomorphic function of any remaining coordinates, and the isometry transformations now act as $\mathbf{Z}\mapsto \mathbf{Z} + \epsilon_1 + \uplambda\epsilon_2$ and $\mathbf{S} \mapsto \mathbf{S} - \epsilon_2$. To perform the gauging, one uses a modification of the (twisted) chirality conditions as in Section \ref{IndependentHolPartsVector}, arriving at
\begin{align} \label{KahpotgaugedCase2}
    \mathcal{K} = -\frac{ic}{\uplambda - \smallthickbar{\uplambda}}&\Big[\left(\frac{\mathbf{Z}-\widebar{\mathbf{Z}}+i\mathbf{V}}{2}+\left(\mathbf{S}-\mathbf{X}\right)\frac{\uplambda - \widebar{\uplambda}}{2}\right)^2+\\
    &+\left(\frac{\mathbf{Z}-\widebar{\mathbf{Z}}+i\widebar{\mathbf{V}}}{2}+\left(\widebar{\mathbf{S}}-\widebar{\mathbf{X}}\right)\frac{\uplambda - \widebar{\uplambda}}{2}\right)^2\Big]+\mathcal{K}_0, \nonumber
\end{align}
where $\mathbf{V}\equiv \mathbf{V}_1 + \widebar{\uplambda}\mathbf{V}_2$ is a complex superfield and $\mathbf{X}$ is left semi-chiral. 

\subsubsection{T-duality.}

In order to perform T-duality, one introduces an additional twisted chiral field $\mathbf{\Sigma}$ and a right semi-chiral field $\mathbf{Y}$ and adds the term
\begin{align}
\Delta\mathcal{K} = (\mathbf{\Sigma}+\widebar{\mathbf{\Sigma}})\mathbf{V}_1 + \mathbf{Y}(\mathbf{V}_2-i\mathbf{X}) + \widebar{\mathbf{Y}}(\mathbf{V}_2+i\widebar{\mathbf{X}})  
\end{align}
to ensure that the $\mathbf{V}_i$'s and $\mathbf{X}$ are pure gauge. One then eliminates the $\mathbf{V}_i$'s and shifts~$\mathbf{X}$, which yields
\begin{align}
    \mathcal{K} = \frac{2i}{c}\frac{\left(\mathbf{Y}+\widebar{\mathbf{Y}}\right)^2}{\uplambda-\smallthickbar{\uplambda}} + \mathcal{K}_0- i\mathbf{Y}\widebar{\mathbf{X}} +i \widebar{\mathbf{Y}}\mathbf{X} \,.
\end{align}
In fact, the last two terms imply that $\widebar{\mathbf{Y}} =\mathbf{\Phi}$ is chiral. Thus, the resulting geometry in this case is K\"ahler.  

\subsubsection{The quotient.}

If one wishes to perform the quotient instead, one adds possible FI-terms to~(\ref{KahpotgaugedCase2}), i.e. $\xi \mathbf{V}_1 + \zeta \mathbf{V}_2$ for $\xi,\zeta \in \mathbb{R}$, and eliminates the  $\mathbf{V}_i$'s. As a result, one finds that the K\"ahler potential of the quotient is given by
\begin{align}
    \mathcal{K} = \mathcal{K}_0 + \frac{i}{c}\frac{1}{\uplambda-\smallthickbar{\uplambda}}\left[\left(\uplambda\xi-\zeta\right)^2 + \left(\smallthickbar{\uplambda}\xi - \zeta\right)^2\right]\,.
\end{align}

\subsection{General case.} To describe the gauging in the general case, one should make use of Proposition~\ref{uniformprop}. It states that any K\"ahler geometry whose Abelian isometry algebra is centrally extended may be obtained by a usual K\"ahler quotient of a universal geometry termed~$\mathcal{N}$. The non-invariant part of $\mathcal{N}$ that sources the central extension is simply a sum of several pieces, each  of the type appearing in~(\ref{InvariantPotentialIndependentCase}). The gauging of each individual piece is described by~(\ref{invariantpotentialindependentcasegauged}), and the full gauged potential is obtained by superimposing these pieces.

We note that one could also follow this approach in the example of Case 2. In Appendix~\ref{appKahquot}, we elaborate on how this geometry may be obtained as a K\"ahler quotient of the geometry of type $\mathcal{N}$, which in this setting is simply the geometry~(\ref{InvariantPotentialIndependentCase}) of Case 1. The alternative derivation was provided  in Section~\ref{lindepsec} to present a different viewpoint. The end result, of course, remains the same.

\section{Conclusion} 

In the present paper we studied  K\"ahler geometries with an Abelian algebra of isometries, such that the corresponding algebra of moment maps has a central extension. As we explained, this is tantamount to the fact that the K\"ahler potential is not truly invariant under all of the symmetry transformations, but is only invariant up to K\"ahler transformations. It has been known for a long time that in this case the naive gauging of such isometries is obstructed, which is why they may be called \emph{anomalous}. Nevertheless, we have shown that such isometries may still be gauged, albeit in the realm of generalized K\"ahler geometry: one first introduces auxiliary twisted chiral fields to make the potential invariant and then utilizes an extended gauge multiplet $\left(\mathbf{V}, \mathbf{X}\right)$ consisting of a real and a semi-chiral superfield. This construction is relevant for performing generalized K\"ahler quotients or T-dualities of the original K\"ahler geometries: in both cases one typically arrives at a generalized K\"ahler geometry, for example, a sigma model interacting with a $\beta\gamma$-system. Future directions include generalizing to non-Abelian isometry algebras and possibly to the infinite-dimensional geometries admitting central extensions (for example, the Virasoro and loop groups).

\vspace{1cm}
\textbf{Acknowledgments.} We acknowledge support from the Russian Science Foundation under grant \href{https://rscf.ru/project/7itJjWqd9pYqz47liTvCgk4qUmdtral1n3ecf41C_0I7jWkQGL8qIR8wWeq93dHAgwWeRIZfZPc~/}{№ 25-72-10177}. The work of A.K. was supported by the Foundation for the Advancement of Theoretical Physics and Mathematics ``BASIS". We would like to thank  M.~Alfimov, A.~Belavin, S.~Fedoruk, M.~Gritskov, E.~Ivanov, V.~Krivorol, U.~Lindstr\"om, A.~Litvinov, M.~Ro\v{c}ek, A.~Smilga, M.~Vasiliev for useful discussions and  S.~Kutsubin for a collaboration on related topics. 

\appendix

\section{K\"ahler quotient in the Heisenberg case}\label{appKahquot}

Here we will show how the K\"ahler potential~(\ref{abelianKah2}) arises from~(\ref{abelianKah1}) via a K\"ahler quotient. We thus start from the K\"ahler potential~(\ref{abelianKah1}),
\begin{align}\label{invpotapp}
    \mathcal{K}=-{ic\over 2}(\mathrm{w}+\smallthickbar{\mathrm{w}})(\mathrm{z}-\smallthickbar{\mathrm{z}})+\mathcal{K}_0(\mathrm{w}-\smallthickbar{\mathrm{w}}, \mathrm{z}-\smallthickbar{\mathrm{z}}, \cdots)+a(\mathrm{r})\,{\mathrm{w}^2\over 2}+\widebar{a}(\smallthickbar{\mathrm{r}}) {\smallthickbar{\mathrm{w}}^2\over 2}\,,
\end{align}
where we have performed an additional K\"ahler transformation  as in~(\ref{anomalousKgen}) (its  meaning will also become clear shortly). 

The corresponding Killing vector fields are $\mathsf{V}^{(1,0)}=\dd_\mathrm{z}$ and ${\mathsf{W}^{(1,0)}=\dd_\mathrm{w}}$. We assume that there is an additional isometry generated by the vector field
\begin{align}
    \mathsf{U}^{(1,0)}=\mathrm{A}(\mathrm{r}){\dd \over \dd \mathrm{w}}+\mathrm{B}(\mathrm{r}){\dd \over \dd \mathrm{z}}
\end{align}
The coefficients are independent of $\mathrm{w}$ and $\mathrm{z}$, since $[\mathsf{V}, \mathsf{U}]=[\mathsf{W}, \mathsf{U}]=0$.

By construction, the vector fields $\mathsf{V}, \mathsf{W}$ lead to a non-trivial central extension: ${\dutchcal{C}(\mathsf{V}, \mathsf{W})=c}$. In contrast, the vector field $\mathsf{U}$ is subject to the condition ${\dutchcal{C}(\mathsf{V}, \mathsf{U})=\dutchcal{C}(\mathsf{W}, \mathsf{U})=0}$, i.e. $\mathsf{U}\in \mathrm{Ker}\,\dutchcal{C}$. By Proposition~\ref{abeliantrivcocycleprop} this implies that $\mathrm{A}(\mathrm{r})\nequiv 0$ and $\mathrm{B}(\mathrm{r})\nequiv 0$. Moreover, one can make the potential invariant under, say, $\mathsf{V}$ and $\mathsf{U}$ simultaneously. 

Any two potentials of the form~(\ref{invpotapp}) with the required invariance properties differ by a function $\widehat{\mathcal{K}}$ invariant under all three vector fields $\mathsf{V}, \mathsf{W}, \mathsf{U}$. We will thus look for $\mathcal{K}_0$ in the form
\begin{align}
    &\mathcal{K}_0=\mathrm{C}(\mathrm{r}, \smallthickbar{\mathrm{r}})\,(\mathrm{w}-\smallthickbar{\mathrm{w}})(\mathrm{z}-\smallthickbar{\mathrm{z}})+{\mathrm{D}(\mathrm{r}, \smallthickbar{\mathrm{r}})\over 2}\,(\mathrm{z}-\smallthickbar{\mathrm{z}})^2+\widehat{\mathcal{K}}(x)\,,\\
    &\textrm{where}\quad\quad x=(\mathrm{A}-\widebar{\mathrm{A}})(\mathrm{z}-\smallthickbar{\mathrm{z}})-(\mathrm{B}-\widebar{\mathrm{B}})(\mathrm{w}-\smallthickbar{\mathrm{w}})\,.
\end{align}
It suffices to show that the functions $\mathrm{C}$, $\mathrm{D}$ and $a(\mathrm{r})$ may be chosen in such a way that the potential $\mathcal{K}$ constructed from $\mathcal{K}_0$ using formula~(\ref{invpotapp}) is invariant under $\mathsf{U}$. A direct calculation gives the values
\begin{align}
    \mathrm{C}={ic\over 2}\,\frac{\mathrm{B}+\widebar{\mathrm{B}}}{\mathrm{B}-\widebar{\mathrm{B}}}\,,\quad\quad \mathrm{D}=i\,c\,\frac{\mathrm{AB}-\widebar{\mathrm{A}}\widebar{\mathrm{B}}}{(\mathrm{B}-\widebar{\mathrm{B}})^2}\,,\quad\quad a=i\,c\,{\mathrm{B}+\kappa\over \mathrm{A}}\,,
\end{align}
with $\kappa$ a real constant. It is then easy to see that $c(\mathsf{W}, \mathsf{U})=0$ implies $\kappa=0$.

To simplify the expressions, it will be useful to rectify the vector field $\mathsf{U}^{(1,0)}=\dd_{\mathrm{u}}$ by making the change of variables $\mathrm{w}=\mathrm{A} \,\mathrm{u}, \mathrm{v}=\mathrm{v}'+\mathrm{B}\,\mathrm{u}$ (we will then drop the prime, so that we effectively substitute $\mathrm{v}\mapsto \mathrm{v}+\mathrm{B}\,\mathrm{u}$). In these variables, the potential takes the form
\begin{align}
    \mathcal{K}={i\, c\over 2}\,\frac{\mathrm{AB} (\widebar{\mathrm{B}}(\mathrm{u}-\smallthickbar{\mathrm{u}})+\mathrm{z}-\smallthickbar{\mathrm{z}})^2-\widebar{\mathrm{A}}\widebar{\mathrm{B}}(\mathrm{B}(\mathrm{u}-\smallthickbar{\mathrm{u}})+\mathrm{z}-\smallthickbar{\mathrm{z}})^2}{(\mathrm{B}-\widebar{\mathrm{B}})^2}+&\\  \nonumber +\,\widehat{\mathcal{K}}\left(\mathrm{u}-\smallthickbar{\mathrm{u}}-{\mathrm{A}-\widebar{\mathrm{A}}\over \widebar{\mathrm{A}}\mathrm{B}-\widebar{\mathrm{B}}\mathrm{A}}\,(\mathrm{z}-\smallthickbar{\mathrm{z}})\right)
\end{align}

As usual, in order to gauge the shift symmetry in $\mathrm{u}$ generated by $\mathsf{U}$ we replace $\mathrm{u}-\smallthickbar{\mathrm{u}}\mapsto \mathrm{u}-\smallthickbar{\mathrm{u}}+i\,\mathbf{V}$, where $\mathbf{V}$ is the real gauge superfield, and choose the gauge $\mathrm{u}=\smallthickbar{\mathrm{u}}=0$. We may also add an FI-term of the form $\xi \,\mathbf{V}$, with $\xi\in \mathbb{R}$. Subsequently performing the redefinition 
\begin{align}
    \mathbf{V}\mapsto \mathbf{V}-i{\mathrm{A}-\widebar{\mathrm{A}}\over \widebar{\mathrm{A}}\mathrm{B}-\widebar{\mathrm{B}}\mathrm{A}}\,(\mathrm{z}-\smallthickbar{\mathrm{z}})\,,
\end{align}
we arrive at the final answer for the gauged potential:
\begin{align}\nonumber
    &\mathcal{K}_g={i\,c\over 2}\,\frac{\mathrm{B}\widebar{\mathrm{A}}-\mathrm{A}\widebar{\mathrm{B}}}{(\mathrm{B}-\widebar{\mathrm{B}})^2}|\mathrm{B}|^2\,\mathbf{V}^2+\widehat{\mathcal{K}}(i\mathbf{V})-{i\,c\over 2}\,\frac{(\mathrm{z}-\smallthickbar{\mathrm{z}})^2}{\uplambda-\widebar{\uplambda}}+{f_0+\widebar{f}_0\over \uplambda-\widebar{\uplambda}}\,(\mathrm{z}-\smallthickbar{\mathrm{z}})\,,\quad\quad\\& \textrm{where}\quad\quad \uplambda=-{\mathrm{B}\over \mathrm{A}}\,,\quad f_0={i\,\xi\over \widebar{\mathrm{A}}}\,.
\end{align}
One observes that the $\mathrm{z}$-dependent part has fully decoupled. Assuming that the equation ${\dd_\mathbf{V} \mathcal{K}_g}=0$ has a unique solution\footnote{This is a rather mild assumption. Indeed, ${\dd^2 \mathcal{K}_g\over \dd \mathbf{V}^2}>0$ due to the fact that the metric is positive-definite,  ${\dd^2 \mathcal{K}\over \dd \mathrm{u} \dd \smallthickbar{\mathrm{u}}}>0$. Thus, if a solution exists, it is unique. A sufficient condition for the existence of a solution would be a uniform bound ${\dd^2 \mathcal{K}\over \dd \mathrm{u} \dd \smallthickbar{\mathrm{u}}}>\epsilon>0$.} $\mathbf{V}=\mathbf{V}(\mathrm{r}, \smallthickbar{\mathrm{r}})$, one can eliminate $\mathbf{V}$ to arrive at the following K\"ahler potential on the quotient:
\begin{align}
    \mathcal{K}_g=-{i\,c\over 2}\,\frac{(\mathrm{z}-\smallthickbar{\mathrm{z}})^2}{\uplambda-\widebar{\uplambda}}+{f_0+\widebar{f_0}\over \uplambda-\widebar{\uplambda}}\,(\mathrm{z}-\smallthickbar{\mathrm{z}})+\mathrm{F}(\mathrm{r}, \smallthickbar{\mathrm{r}})\,,
\end{align}
which is exactly expression~(\ref{Kahpottoruslinterm}). As observed there, one can effectively get rid of the linear term in $\mathrm{z}-\smallthickbar{\mathrm{z}}$ by a shift $\mathrm{z}\mapsto \mathrm{z}+{1\over i c} f_0$,  arriving at~(\ref{abelianKah2}).

It is also instructive to note that, upon setting $\mathsf{U}^{(1,0)}\equiv 0$ on the quotient, one finds $\mathsf{V}^{(1,0)}={\dd_\mathrm{z}}$ and $\mathsf{W}^{(1,0)}=\uplambda(\mathrm{r})\,{\dd_\mathrm{z}}$,  which brings us back to the setup of Section~\ref{HeisenbergAlgebra}.

\vspace{0.3cm}    
    \setstretch{0.8}
    \setlength\bibitemsep{5pt}
    \printbibliography
    
\end{document}